\documentclass[12pt]{article}
\usepackage[margin=1in]{geometry}

\usepackage{style/qstyle}

\DeclareMathOperator{\PME}{PME}
\DeclareMathOperator{\FD}{FD}
\DeclareMathOperator{\NPA}{NPA}

\title{On the undecidability of quantum channel capacities}

\author[1]{Archishna Bhattacharyya\footnote{abhat086@uottawa.ca}}
\author[1]{Arthur Mehta\footnote{amehta2@uottawa.ca}}
\author[2]{Yuming Zhao\footnote{yuming@math.ku.dk}}
\affil[1]{University of Ottawa, Department of Mathematics and Statistics}
\affil[2]{QMATH, Department of Mathematical Sciences, University of Copenhagen}

\date{\today}

\begin{document}

\maketitle

\vspace{0.5cm}

\begin{abstract}
    An important distinction in our understanding of capacities of classical versus quantum channels is marked by the following question: \emph{is there an algorithm which can compute (or even efficiently compute) the capacity?} While there is overwhelming evidence suggesting that quantum channel capacities may be uncomputable, a formal proof of any such statement is elusive. We initiate the study of the hardness of computing quantum channel capacities. We show that, for a general quantum channel, it is \textsf{QMA}-hard to compute its quantum capacity, and that the entanglement-assisted zero-error capacity under some restrictions is uncomputable; indicative of the fact that quantum channel capacities may generally be undecidable.
\end{abstract}

\vspace{1cm}

\tableofcontents

\section{Introduction} \label{sec: intro}

Quantum channels represent dynamical processes occurring in Nature from an information-theoretic viewpoint. Quantifying the rate of transmission through a channel is equivalent to ascertaining the amount of resources consumed or created in a dynamical process and underpins much of information theory founded from the groundbreaking work of Shannon \cite{Sha48}. The maximum rate of transmission is called the capacity of the channel. Informally, the capacity is the ratio of the number of bits over the number of channel uses, where a bit measures the amount of entropy or information. While for classical channels there exists only the classical capacity, for quantum channels the landscape becomes richer, as one may send classical or quantum bits with or without the assistance of resources such as entanglement, symmetric side channels, and others. Moreover, the data sent may or may not be private, i.e., shielded from an adversary and only accessible to the receiver. Depending on these properties, the capacities may be classical, private, or quantum with or without assistance. 

On the basis of whether information transmission through a channel is perfect or imperfect one can further restrict the fidelity of the state at the channel input with that at the output to be unity. Such a model depicts perfect transmission characterising \textit{zero-error} capacities. Interestingly enough, it was again Shannon's work \cite{Sha56} founding the notion of zero-error capacities by associating to a channel its so-called \textit{confusability graph}, encoding many of its properties. In the quantum case, Duan-Severini-Winter \cite{DSW12} introduced not only the notion of zero-error quantum capacities, but presented them via a noncommutative generalisation of a graph, associating to every channel $\Phi$ its (quantum) confusability graph or \textit{operator system} $S_{\phi}$. 

In terms of our understanding of classical versus quantum channel capacities an important distinction is marked by the following question:
\begin{question}\label{Q:decidability}
    Is there an algorithm which can compute (or even efficiently compute) the capacity?
\end{question}

Classically, the work of Shannon \cite{Sha48} shows that the capacity of a classical channel is given by optimising a certain entropic quantity — the mutual information, $I(X ; Y) \coloneqq H(X) + H(Y) - H(XY)$\footnote{Here, $H(X)$ is the Shannon entropy and $X, Y$ are classical random variables.}, over a single use of the channel. It immediately follows from the fact that $I(X ; Y)$ is \textit{additive} that there is an efficiently computable single-letter formula for the capacity of a classical channel.

In contrast, Question \ref{Q:decidability} is poorly understood in the quantum setting. While such formulae are known for some restricted families of quantum channels \cite{CRS08}, no such characterisation is known in general. The best known coding theorem \cite{Lloyd, Shor02, Devetak} expresses the quantum capacity in regularized form as
\[
Q(\Phi)=\lim_{n\to\infty} \frac{1}{n}Q^{(1)}(\Phi^{\otimes n}), 
\]
where $Q^{(1)}(\Phi)=\max_{\rho} I_c(\Phi,\rho)$ is the one-shot quantum capacity, defined via the coherent information \(I_c(\Phi,\rho)\) (see \cref{def: qcap}).

Superadditivity \cite{Has09} and superactivation \cite{SY08} already indicate that such a regularization is unavoidable. 
Stronger still, Cubitt et al.\ \cite{CEM+15} showed that for every $n\in\N$, there exists a quantum channel $\Phi$ such that $\frac{1}{n}Q^{(1)}(\Phi^{\otimes n}) < Q(\Phi)$. 
Taken together, these phenomena strongly suggest that the quantum capacity may be uncomputable \cite{P-ECG+25}. 
However, they do not directly address its computability. 
In particular, even a positive answer to the following question would not, on its own, resolve \cref{Q:decidability}.
\begin{question}\label{question2}
Does there exist a quantum channel $\Phi$ such that for every $n\in\N$,
\[
\frac{1}{n}Q^{(1)}(\Phi^{\otimes n}) < Q(\Phi)\,?
\]
\end{question}
While the existence of such an exotic channel would be remarkable, it falls short of resolving \cref{Q:decidability}. 
For example, it would not rule out the existence of a yet-undiscovered entropic quantity that provides a more direct method for computing $Q(\Phi)$.
The intuitive connection between \cref{Q:decidability} and \cref{question2} is nonetheless well founded, but the implication may actually run in the opposite direction. As we discuss in \Cref{subsec:parallel}, if the problem of determining whether $Q(\Phi)\geq \tfrac{1}{2}$ or $Q(\Phi)< \tfrac{1}{2}$ is suitably hard, in a computational sense, then \cref{question2} must admit a positive answer.

The connection between \cref{Q:decidability} and \cref{question2} draws strong parallels with undecidability phenomena in the study of non-local games \cite{FKM+25}. 
In that setting, progress toward undecidability was preceded by a sequence of intermediate hardness results. 
While we do not resolve \cref{Q:decidability}, we establish the first such intermediate hardness results toward this goal. 

\subsection{Main results}

 In this work we initiate a formal investigation of the computational hardness and computability of quantum channel capacities and give two hardness results in the spirit of answering Question \ref{Q:decidability}.  

\subsubsection{QMA hardness}
Our first main result is that estimating quantum capacities of quantum channels is QMA-hard.

\begin{theorem}[Informal version of \cref{cor:QMA-hard}]\label{thm:QMA}
    Given a quantum channel $\Phi$, it is \textsf{QMA}-hard to decide if the quantum capacity, $Q(
    \Phi)$, is $\geq \frac{3}{4}$ or $\leq \frac{1}{4}$, promised one of them holds. 
\end{theorem}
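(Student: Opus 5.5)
The plan is a reduction from a standard \textsf{QMA}-complete problem to the capacity promise problem. The source problem is the \emph{non-identity check} (or, equivalently, ``quantum circuit distinguishability'' / the channel version of the local Hamiltonian problem). Concretely, I would start from the \textsf{QMA}-complete problem: given a quantum circuit $U$ on $n$ qubits (together with ancillas), decide whether some input state is accepted with probability $\geq 1-\epsilon$ or all inputs are accepted with probability $\leq \epsilon$. Here $\epsilon$ is amplified to be exponentially small by Marriott--Watrous amplification. From $U$ I would efficiently build a channel $\Phi_U$ whose quantum capacity is large exactly when an accepting witness exists.

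The construction I would try is a \emph{flagged switch} channel. The input is a witness register $W$ together with a message register $M$ of $m$ qubits. The channel first runs the verifier on $W$ and measures the output qubit, obtaining a classical bit $b$. If $b=1$ (accept), it transmits $M$ through an identity channel. If $b=0$, it transmits $M$ through a completely depolarizing (or erasure) channel. It then traces out $W$ and outputs the flag $b$ to the receiver.
\begin{itemize}
\item \textbf{YES case.} Alice inputs a good witness on $W$ in every use. The resulting effective channel on $M$ is $(1-\epsilon)\,\mathrm{id}+\epsilon\,\mathcal{E}$ with the flag known, i.e.\ an erasure-type channel. Since the quantum capacity of an erasure channel is $(1-2\epsilon)\,m$ per use, this gives $Q(\Phi_U)\geq (1-2\epsilon)\,m$.
\item \textbf{NO case.} I would show that $Q(\Phi_U)$ is small by writing $\Phi_U$ as a convex combination, or better as a flagged mixture, dominated by a channel in which the identity branch has weight at most $\epsilon$ for every input state.
\end{itemize}
Normalizing with $m=1$ (or rescaling the thresholds) then yields the $\frac34$ versus $\frac14$ gap.

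The main obstacle is the NO case, because capacity is a regularized quantity: entangled witnesses across many uses, or correlations between $W$ and $M$, could in principle boost the rate. I would handle this with a degradability or erasure-comparison argument. Because the acceptance probability is $\leq\epsilon$ for \emph{every} witness state, including reduced states of entangled inputs, $\Phi_U$ can be simulated by first applying an erasure channel with erasure probability $\geq 1-\epsilon$ and then post-processing. Hence $Q(\Phi_U)\leq Q(\text{erasure}_{1-\epsilon})\cdot m$, which is $0$ once $1-\epsilon\geq \frac12$. Making this simulation exact is delicate, since the verifier's measurement disturbs $W$ and the acceptance event is correlated with $W$ but not with $M$. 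The key point I would verify is that $M$ is untouched until the branch is chosen. This independence is what lets the simulation go through and keeps the bound valid under regularization.

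If the exact simulation fails, my fallback is to bound $Q$ by the entanglement-assisted capacity, using continuity (Alicki--Fannes type) in the diamond distance between $\Phi_U$ and a fully erasing channel. That distance is $O(\sqrt{\epsilon})$ in the NO case, and the entanglement-assisted capacity is single-letter, so it remains controlled under regularization.
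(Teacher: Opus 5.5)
Your reduction is correct, but it takes a different route from the paper. The paper uses $\mathcal{A}_{1/2}\oplus_{W_x}\mathcal{A}_0$, in which the measured verifier output is copied to \emph{both} the receiver and the environment. This symmetric direct sum preserves degradability, so $Q=Q^{(1)}$, and the coherent information splits as $\sum_z p_z I_c(\Phi_{z_1},\rho_z)$. That gives the exact identity $Q(\Phi_x)=\max_\rho \Pr[\text{accept}]$.

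You instead send only the flag to the receiver and bound $Q$ operationally. The NO-case simulation you called delicate does hold exactly. Let $\Pi_1$ be the acceptance operator on the witness register, so soundness means $\Pi_1\leq\epsilon I$. Take the reject branch to be a replacement channel, say completely depolarizing. Then
\[
\Phi_U(\rho)=\ket{1}\bra{1}\otimes\Tr_W\big[(\Pi_1\otimes I)\rho\big]+\ket{0}\bra{0}\otimes\Tr\big((I-\Pi_1)\rho_W\big)\frac{I}{d}
=\epsilon\,\mathcal{P}_1(\rho)+(1-\epsilon)\ket{0}\bra{0}\otimes\frac{I}{d},
\]
where $\mathcal{P}_1$ measures $W$ with the valid POVM $\{\Pi_1/\epsilon,\,I-\Pi_1/\epsilon\}$ and outputs $(1,M)$ or $(0,I/d)$ accordingly.

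Hence $\Phi_U=\mathcal{P}\circ\mathcal{E}_{1-\epsilon}$, where $\mathcal{E}_{1-\epsilon}$ is the erasure channel on the \emph{whole} input $WM$. This channel is anti-degradable for $\epsilon\leq\frac12$, so data processing gives $Q(\Phi_U)=0$ regardless of dimension; the factor $m$ in your bound is unnecessary. Because this is a channel identity, it covers entangled and multi-use inputs automatically. The disturbance of $W$ is irrelevant since $W$ is discarded, and your continuity fallback is not needed.

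Your YES case is also fine. Appending a fixed witness is an encoding, and the resulting flagged channel is interconvertible with an erasure channel of erasure probability at most $\epsilon$ by receiver post-processing, so $Q\geq(1-2\epsilon)m$. Two small remarks: the source problem you actually use is generic \textsf{QMA} circuit acceptance rather than non-identity check, and any \textsf{QMA} verifier can be plugged in directly. Your channel also has an evident polynomial-size circuit, as the formal version requires.

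As for what each approach buys:
\begin{itemize}
\item Your argument uses only standard facts: data processing, the erasure-channel capacity, and anti-degradability, with no entropy calculations. It gives a stronger gap, since NO instances map to channels of capacity exactly $0$ for any soundness at most $\frac12$. However, it needs completeness near $1$ (amplification, or $m\geq 2$ if $c=\frac34$), and it only bounds $Q$ rather than computing it.
\item The paper's construction is tight and amplification-free. Because its channels are degradable, it also yields \textsf{QMA}-hardness within the class of degradable channels and for the one-shot capacity $Q^{(1)}$. The price is introducing and analysing the symmetric direct sum.
\end{itemize}
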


In fact, we establish a reduction from any \textsf{QMA}-complete language to the problem of estimating the quantum capacity of \emph{degradable} quantum channels, which then implies that it is QMA hard to estimate the quantum capacity of an arbitrary quantum channel. Since for degradable channels, the quantum capacity equals the one-shot quantum capacity \cite{CRS08}, our reduction also yields \textsf{QMA}-hardness of the one-shot quantum capacity for degradable channels. This is somewhat surprising: degradable channels are among the best understood classes of quantum channels, and the one-shot quantum capacity already has a simple single-letter formula, yet estimating one-shot quantum capacity of degradable channels remains hard even for quantum computers. 

While \cref{thm:QMA} does not rule out the possibility of a single-letter formula for the quantum capacity in full generality, it does show that such a formula, even when available, need not be efficiently computable (unless \textsf{P}$=$\textsf{QMA}).  

\subsubsection{Undecidability}
In the spirit of Question \ref{Q:decidability}, our second main contribution establishes an undecidability result for a quantum channel capacity. Specifically, we consider a restricted variant of the entanglement-assisted one-shot zero-error classical capacity of a classical-to-quantum (abbrev. c-q) channel, denoted $C^{(1)}_{0,\PME}(\Phi)$. We show that the threshold problem for this capacity is undecidable.

\begin{theorem}[Informal version of \Cref{theorem:RE}]\label{thm:RE-inf}
    Given a c-q channel $\Phi$ and an integer $t$, it is \textsf{RE}-hard to determine whether $C^{(1)}_{0,\PME}(\Phi)\geq \log(t)$. 
\end{theorem}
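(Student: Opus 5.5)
The plan is to reduce from the synchronous MIP$^*$=RE result: it is \textsf{RE}-hard to decide whether a synchronous nonlocal game $G$ has perfect commuting/tensor-product quantum strategy, or more conveniently, whether a graph $X$ has a quantum (tensor-product) $t$-colouring, i.e.\ whether $\chi_q(X)\le t$, since quantum colouring games are known to encode undecidable problems. The bridge is the zero-error entanglement-assisted framework of Cubitt--Leung--Matthews--Winter. There, a c-q channel $\Phi$ with confusability graph $G_\Phi$ admits a one-shot entanglement-assisted zero-error code of size $t$ exactly when the complement graph admits a suitable projective representation. With the restriction ``PME'' (assistance by a maximally entangled state, with the receiver's decoding given by projective measurements), this is equivalent to a quantum homomorphism, or quantum independent set of size $t$, realised by projectors and a maximally entangled state. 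Hence $C^{(1)}_{0,\PME}(\Phi)\ge\log t$ iff $\alpha_q(G_\Phi)\ge t$ in the PME sense.

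The key steps, in order, are as follows.

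\begin{enumerate}
\item Fix the definition of $C^{(1)}_{0,\PME}$ and prove the characterisation above. Sender input $x\in[t]$ induces a projective measurement $\{P^x_v\}$ on Alice's half of a maximally entangled state, and she sends $v$. Bob must then perfectly distinguish the resulting states $\Phi(v)\otimes\cdots$. Orthogonality requires $\operatorname{tr}(P^x_vP^{y}_{w}{}^{T})=0$ whenever $x\ne y$ and $v,w$ are confusable or equal. Conversely, such projectors yield a code.
\item Observe that every finite graph arises as $G_\Phi$ for an explicit c-q channel, constructed via orthogonal representations, and that this is computable from the graph.
\item Use the known equivalence between quantum independence numbers and quantum chromatic numbers of complements and products. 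For the homomorphism game $X\to K_t$ one has $\chi_q(X)\le t$ iff $\alpha_q(X\square K_t)\ge |V(X)|$ (Mančinska--Roberson). Composing this with steps 1 and 2 gives a computable map $(X,t)\mapsto(\Phi,|V(X)|)$.
\item Invoke \textsf{RE}-hardness of $\chi_q\le t$ for tensor-product strategies with maximally entangled states. Perfect synchronous strategies automatically use maximally entangled states and projective measurements in finite dimensions, which is why the PME restriction appears.
\end{enumerate}

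The main obstacle is step 1 combined with step 4: matching the PME restriction exactly to perfect synchronous strategies. Arbitrary entanglement in the capacity definition may not reduce to maximally entangled states, and I would try to avoid that issue by building the restriction into the definition. I also need the hardness statement for finite-dimensional perfect strategies: MIP$^*$=RE gives hardness of the promise ``value 1 vs $\le 1/2$'', and with a promise gap the \textsf{RE}-hardness transfers to perfect quantum colourings via the synchronous-to-colouring reductions of Ji and others. Care is needed that completeness in MIP$^*$=RE yields perfect finite-dimensional strategies, which holds in the Halting-problem reduction as stated in \cite{FKM+25}-type results.
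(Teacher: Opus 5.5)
Your proposal is correct in outline, and your steps 1--2 match the paper, but you reach the independent-set problem by a longer path.

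\textbf{How the paper proceeds.} The paper never passes through colourings. It maps a synchronous game $\mathcal{G}$ with question set $Q$ and answer set $A$ directly to the graph on $Q\times A$ whose edges are the losing answer pairs, with threshold $|Q|$. It then uses that $\mathrm{IS}(G,|Q|)$ has a perfect quantum strategy iff $\mathcal{G}$ does. Combined with \textsf{RE}-hardness of perfect finite-dimensional strategies for synchronous games (from $\textsf{MIP}^{\ast}=\textsf{RE}$), this gives \textsf{RE}-hardness of $\alpha_q(G)\ge t$, and the channel $\Phi_G$ finishes the argument.

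\textbf{How your route compares.} Your step 3 is exactly the special case of this map for the $t$-colouring game of $X$: its losing-pair graph is the Cartesian product of $X$ with $K_t$, and $|Q|=|V(X)|$. So the detour through $\chi_q$ is unnecessary. It also costs you the heaviest external inputs in your argument: \textsf{RE}-hardness of quantum colourability, which needs Ji-type commutativity-gadget reductions on top of $\textsf{MIP}^{\ast}=\textsf{RE}$, plus the Man\v{c}inska--Roberson product theorem.

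By contrast, the equivalence behind the paper's direct map is elementary. Put $R^i_q=\sum_a P^i_{(q,a)}$. The synchronicity edges give $R^i_qR^j_q=0$ for $i\neq j$, so $\sum_i R^i_q\le I$, while $\sum_{q,i}R^i_q=|Q|\,I$. This forces $\sum_i R^i_q=I$, and then $E^q_a=\sum_i P^i_{(q,a)}$ is a perfect PVM strategy for $\mathcal{G}$. What your route gains is only a more specific statement: hardness already for graphs of product form with small fixed $t$, provided one accepts the sharper colouring results.

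\textbf{Corrections.}
\begin{itemize}
\item In $C^{(1)}_{0,\PME}$ the PVM restriction is on the sender's encodings $\mathcal{E}_m$, not on the receiver's decoding. Your step 1 uses it correctly; your opening paragraph does not.
\item The transpose in step 1 is misplaced. For every $x$ one has $(\mathcal{E}_x\otimes\mathrm{id})(\tau)=\sum_v |v\rangle\langle v|\otimes\frac{1}{d}(P^x_v)^T$. So both factors carry a transpose, and $\mathrm{Tr}(\rho_x\rho_y)=\frac{1}{d^2}\sum_{v,w}|\langle\psi_v|\psi_w\rangle|^2\,\mathrm{Tr}(P^x_vP^y_w)$. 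The resulting condition is $P^x_vP^y_w=0$, which is what matches $\alpha_q$. Your condition $\mathrm{Tr}(P^x_v(P^y_w)^T)=0$ would not give the independent-set condition for a single family of PVMs.
\item The promise gap plays no role. You only need that halting instances have perfect finite-dimensional strategies and non-halting ones have none. The colouring reductions preserve existence of perfect strategies, not gaps.
\end{itemize}
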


Operationally, the entanglement-assisted one-shot zero-error classical capacity $C^{(1)}_{0,E}(\Phi)$ of a quantum channel $\Phi$ is the maximum number of classical messages that can be transmitted perfectly through a single use of $\Phi$, assuming the sender and receiver share entanglement in advance. The restricted capacity, $C^{(1)}_{0,\PME}(\Phi)$, is defined similarly for c-q channels, except that the shared entangled state is required to be \emph{maximally entangled}, and decoding operations are projective measurement channels (see \Cref{def: max-EA}).

At a conceptual level, the restrictions appearing in the definition of  $C^{(1)}_{0,\PME}(\Phi)$ reflect the structure of perfect strategies of nonlocal games used in our reduction, which we explain in the technical overview below. Thus, \cref{thm:RE-inf} shows that undecidability already arises for a restricted notion of zero-error capacity for quantum channels.

\subsection{Technical contributions and proof ideas}

Next, we present an overview of the techniques used and highlight the intuitive idea behind our proofs. 

\subsubsection{Reduction from \textsf{QMA} problems to quantum capacity of channels}

\paragraph{Encoding channels as circuits.} 

To define our promise problem related to \cref{thm:QMA} we follow the approach of \cite{CM23} and consider a quantum channel to be encoded as the description of a quantum circuit $C$ which implements the channel $\Phi_C$. These circuits are considered to be comprised of unitary gates from some standard universal gate set, as well as non-unitary operations of partial trace and state preparation. Our problem is then as follows: 
\begin{quote}
    Given the description of a circuit $C$, decide if the corresponding channel $\Phi_C$ satisfies $Q(\Phi_C) \geq \frac{3}{4}$ or $Q(\Phi_C) \leq \frac{1}{4}$, given one of these two cases holds.
\end{quote}
To prove the \textsf{QMA}-hardness for the above decision problem, we appeal directly to the definition of the complexity class \textsf{QMA}. The starting point is to consider an equivalent characterisation in which a verifier, given an input $x$ from a \textsf{QMA} language, produces a classical description of the quantum circuit $V(x)$ (see \cref{fig:QMA}). This quantum circuit implements a quantum channel $\Phi_{V(x)}$ from a proof register to a single-qubit decision register $R$. The probability, $\text{Pr}(\text{verifier accepts }x \text{ with }\rho)$, that the verifier accepts $x$ with a proof state $\rho$ is then given by
\begin{equation}
    \langle 1 \vert \Phi_{V(x)}(\rho) \vert 1 \rangle = \text{Tr}\big(U_x^{\dagger}(\vert 1\rangle\langle 1 \vert\otimes I)U_x\rho    \big), \label{eq1}
\end{equation}
where $U_x$ is the Stinespring isometry for $\Phi_{V(x)}$. Suppose this QMA language has completeness $c$ and soundness $s$. If $x\in Y$, then there exists a proof state $\rho$ such that $\text{Pr}(\text{verifier accepts }x \text{ with }\rho)\geq c$; if $x\in N$, then $\text{Pr}(\text{verifier accepts }x \text{ with }\rho)\leq  s$ for all proof states $\rho$.

\paragraph{Intuitive approach of \textsf{QMA}-hardness.}
An intuitive approach for the reduction would be to choose two channels whose quantum capacities are well understood and exemplify the extremes: a channel $\Phi_{\mathcal{Y}}$ with full capacity, and a channel $\Phi_{\mathcal{N}}$ with zero capacity. We then consider the following construction. Fix a \textsf{QMA} language $(Y,N)$, and map each instance $x$ to a channel $\Phi_x$ defined as follows. $\Phi_x$ first applies the Stinespring dilation $U_x$ of the verification channel $\Phi_{V(x)}$ to the first register $R$, say of the input state $\rho$, and then measures $R$ in the computational basis. Conditioned on the first bit of decision register $R$, it applies $\Phi_{\mathcal{Y}}$ to $R'$, the second register of $\rho$ if $R$ is in the state $\vert 1 \rangle \langle 1 \vert$, and applies $\Phi_{\mathcal{N}}$ to $R'$ if $R$ is in the state $\vert 0 \rangle \langle 0 \vert$. At a heuristic level, this construction suggests the desired reduction: if $x \in Y$, then $\Phi_x$ behaves similarly to $\Phi_{\mathcal{Y}}$ and has large quantum capacity, while if $x \in N$, then $\Phi_x$ behaves similarly to $\Phi_{\mathcal{N}}$ and has small quantum capacity.

While this approach seems promising, the quantum capacity $Q(\Phi)$ is notoriously difficult to analyse, even for channels constructed from seemingly simple components. To address this, we introduce a variant of the well-known direct sum of quantum channels, which we call the \emph{symmetric direct sum}, and show that, together with an appropriate choice of $\Phi_{\mc{Y}}$ and $\Phi_{\mc{N}}$, the spirit of this approach can be made rigorous.

\paragraph{Symmetric direct sums of channels} The standard direct sum of channels $\Phi_0: B(\mathcal{H}_0) \rightarrow B(\mathcal{K}_0)$ and $\Phi_1: B(\mathcal{H}_1) \rightarrow B(\mathcal{K}_1)$ is given by \begin{equation*}\label{eq:DirectSum}
    \Phi_0 \oplus \Phi_1  \begin{pmatrix}
    X_0 & 0 \\ 0 & X_1
\end{pmatrix}  = \begin{pmatrix}
    \Phi_0(X_0) & 0 \\ 0 & \Phi_1(X_1)
\end{pmatrix}. \end{equation*} It is known that this defines a valid quantum channel $\Phi_0 \oplus \Phi_1: B(\mathcal{H}_0 \oplus \mathcal{H}_1) \rightarrow B(\mathcal{K}_0 \oplus \mathcal{K}_1)$ \cite{watrous2018theory}. More general direct sum constructions have been studied in the context of channel capacities, including partially coherent direct sums \cite{CG21} and generalised direct sums \cite{WZ25}.

Such constructions are not suitable for our purposes. First, the inputs to our promise problems are encoded as classical descriptions of quantum circuits. Although the direct sums above are natural at the level of linear-algebraic expressions, they do not interact cleanly with the circuit model. In particular, given circuits $C_0$ and $C_1$ implementing channels $\Phi_0$ and $\Phi_1$, it is not straightforward to construct a circuit implementing the direct sum $\Phi_0 \oplus \Phi_1$. One concrete obstruction is that the Hilbert spaces $\mathcal{H}_0 \oplus \mathcal{H}_1$ and $\mathcal{K}_0 \oplus \mathcal{K}_1$ need not have dimension a power of two, making their direct realisation within the circuit formalism unwieldy. The second, and more substantial, issue with the constructions above is that for our purposes we must combine two channels $\Phi_0$ and $\Phi_1$ with a common input space $R$ in such a way that the resulting channel still acts on $R$, rather than on the direct sum $R \oplus R$. To address this, in \cref{def:symm-dir-sum}, we introduce a new construction that generalises the direct sum of channels as well as the probabilistic direct sums of \cite{CM23}. 

Concisely, given channels $\Phi_0, \Phi_1: B(\mathcal{H}) \rightarrow B(\mathcal{K})$ and an isometry $U:\mathcal{H}'\rightarrow (\mathbb{C}^2)^{\otimes n}$ we define the symmetric direct sum as 
\begin{align*}
&\Phi_0 \oplus_{U} \Phi_1 (\sigma\otimes\rho) =
    \begin{pmatrix}
        \mathcal{F}_0(\sigma) \otimes \Phi_0 (\rho) & 0 \\
        0 & \mathcal{F}_1(\sigma) \otimes \Phi_1 (\rho)
    \end{pmatrix},
\end{align*}
where $\mathcal{F}_i(\sigma) = (\langle i \vert \otimes I)\mathcal{M}(U\sigma U^\dag)(\vert i \rangle \otimes I),i=0,1$, and $\mathcal{M}(\tau)=\sum_{x\in\{0,1\}^n}\vert x \rangle \langle x \vert \tau \vert x \rangle \langle x \vert$ is the channel corresponding to measurement in the computational basis. Extending linearly from elementary tensors to arbitrary states, it can be confirmed that $\Phi_0 \oplus_{U} \Phi_1$ is a well-defined quantum channel from $ B(\mathcal{H}'\otimes\mathcal{H}) \rightarrow B((\mathbb{C}^2)^{\otimes n}\otimes\mathcal{K})$.

As outlined in \cref{rem:cir-dir-sum}, given circuits implementing the channels \(\Phi_0\) and \(\Phi_1\), together with a circuit implementing the unitary $U$, one can construct, in polynomial time, a circuit for the channel $\Phi_0 \oplus_U \Phi_1$ as illustrated in \cref{fig:circuit}. The strength of the symmetric direct sum comes from two key properties that allow us to analyse $Q(\Phi_0 \oplus_U \Phi_1)$ through the one-shot quantum capacity for suitable choices of $\Phi_0$ and $\Phi_1$. First, the complement of the symmetric direct sum is equal to the symmetric direct sum of the individual complementary channels — a property that also holds for standard direct sum constructions. Second, due to the symmetry between the output and environment, it preserves degradability. The latter notably is a novel property of the symmetric direct sum not observed in previous constructions.

\begin{enumerate}[]
    \item \textit{Property 1.} $(\Phi_0 \oplus_U \Phi_1)^c = \Phi^c_0 \oplus_U \Phi^c_1$.
    \item \textit{Property 2.}  If $\Phi_0, \Phi_1$ are degradable channels, then $\Phi_0 \oplus_U \Phi_1$ is degradable, and consequently $Q(\Phi_0 \oplus_U \Phi_1) = Q^{(1)}(\Phi_0 \oplus_U \Phi_1)$.
\end{enumerate}

See \cref{lem:comp-direct-sum,lem:degr-dir-sum} for the precise statements and proofs of these properties. These observations are crucial for our approach in determining $Q(\Phi_0 \oplus_U \Phi_1)$ when $\Phi_0$ and $\Phi_1$ are chosen from the correct family of channels.

\paragraph{Role of degradable and anti-degradable channels}

In our proof of \cref{thm:QMA} via the symmetric direct sum construction, we select channels with a certain property such that the quantum capacity of the symmetric direct sum, $Q(\Phi_0 \oplus_U \Phi_1)$ can be expressed in terms of $Q(\Phi_0)$ and $Q(\Phi_1)$. Namely, we choose a channel that, depending on certain channel parameters, is degradable as well as anti-degradable (see \cref{sec:channels} for a definition of (anti-)degradability and properties of such channels). We then take the two extreme cases which lead to a channel with maximum capacity like the identity channel, and zero capacity like an anti-degradable channel. 

Intuitively, the property of degradability gives us that the coherent information, an entropic quantity that characterises the one-shot quantum capacity, is additive. Hence, the quantum capacity becomes exactly equal to the coherent information, and no longer requires a regularisation. The property of anti-degradability gives us that the quantum capacity is zero, as it allows the environment of the channel to simulate the state of the receiver (see \cref{sec:channels}). Specifically for our proof, we choose, $\mc{A}_{\eta}$, the qubit amplitude damping channel, and take $\mc{A}_0 = \text{id}$ the identity channel, and $\mc{A}_{1/2}$ an anti-degradable channel with no capacity as described in \cref{rem:amp-damp}.

Properties 1 and 2 allow us to compute the quantum capacity of any symmetric direct sum of  $\mathcal{A}_{1/2}$ and $\mathcal{A}_{0}$. For any isometry $U$, we show that
\begin{equation}
Q(\mathcal{A}_{1/2}\oplus_U \mathcal{A}_{0})=\max_{\rho} \text{Tr}\big(U^{\dagger}(\vert 1\rangle\langle 1 \vert\otimes I)U\rho    \big).\label{eq2}
\end{equation}
Given any QMA language $(Y,N)$, we map each instance $x\mapsto \Phi_x:= \mathcal{A}_{1/2}\oplus_{U_x} \mathcal{A}_{0}$, where $U_x$ is the Stinespring isometry of the verification channel $\Phi_{V(x)}$. Comparing \Cref{eq1,eq2}, we see that
\begin{equation}
    Q(\Phi_x)=\max_{\rho} \text{Pr}(\text{verifier accepts }x \text{ with }\rho). 
\end{equation}
 The mapping from $x$ to the description of the circuit that implements $\Phi_x$ is efficient, as illustrated in \Cref{fig:circuit}. This establishes a Karp reduction from any QMA-complete problem to the problem of estimating the quantum capacity of quantum channels.

\subsubsection{Undecidability via graph parameters}

\paragraph{Classical capacity and independence number} Let $G=(V,E)$ be an undirected graph. The \emph{independence number} $\alpha(G)$ is the maximum size of a set of pairwise non-adjacent vertices. Given a classical channel $\Phi$ with confusability graph $G$, it is well known that the one-shot zero-error capacity satisfies 
\begin{equation}
    C^{(1)}_0(\Phi)=\log(\alpha(G)).\label{eq3}
\end{equation}
Moreover, for any graph $G$, one can efficiently construct a classical channel $\Phi_G$ such that $G$ is the confusability graph of $\Phi_G$. Thus, computing the one-shot zero-error capacity of a classical channel is at least as hard as computing the independence number of a graph. Since computing $\alpha(G)$ is \textsf{NP}-complete, it follows that computing $C^{(1)}_0(\Phi)$ is \textsf{NP}-hard. 

Our undecidability result, \Cref{thm:RE-inf}, follows from a similar complexity-theoretic paradigm. Namely, we establish a reduction from \emph{quantum independence number}, which is known to be uncomputable, to the restrict variant of the entanglement-assisted one-shot zero-error capacity of a quantum channel.

\paragraph{Quantum independence number}  An operational characterisation of the independence number $\alpha(G)$ is given by the \emph{independent set game} $\mathrm{IS}(G,t)$, introduced in \cite{MRV15}. In this game, two non-communicating players aim to convince a verifier that $G$ has an independent set of size at least $t$. The game $\mathrm{IS}(G,t)$ has a perfect classical strategy if and only if $\alpha(G)\geq t$. Allowing the players to share (finite-dimensional) entanglement leads to the \emph{quantum independence number} $\alpha_q(G)$, defined as the largest $t$ for which $\mathrm{IS}(G,t)$ admits a perfect \emph{quantum strategy}. The recent result $\textsf{MIP}^{\ast}=\textsf{RE}$~\cite{JNV+21} implies that it is \textsf{RE}-hard to determine if a given independent set game has a perfect quantum strategy. As a result, given any graph $G$ and a positive integer $t$, it is \textsf{RE}-hard to determine whether $\alpha_q(G)\geq t$. 

In \Cref{sec: RE-hard}, we establish the following quantisation of \Cref{eq3}.
\begin{theorem}[Resated from \Cref{theorem:main=}]\label{theorem:main=inf}
    There is a computable mapping from graphs $G$ to c-q channels $\Phi_G$ such that
    \begin{equation}
    C^{(1)}_{0,\PME}(\Phi_G)=\log (\alpha_q(G)).\label{eq4}
\end{equation}
\end{theorem}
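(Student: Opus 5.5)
The plan is to construct $\Phi_G$ as the c-q channel whose confusability structure is exactly $G$, and then match perfect strategies for $\mathrm{IS}(G,t)$ with entanglement-assisted zero-error codes that use maximally entangled states and projective decoding.

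\textbf{Construction.} For $G=(V,E)$ with $|V|=n$, I would take a Cayley-type or orthogonal representation of the complement. The cleanest choice is as follows. Let $\Phi_G$ send the classical input $v\in V$ to a state $\rho_v$ on a finite-dimensional output space $\mathcal{K}$ with two properties:
\begin{itemize}
\item $\rho_v\perp\rho_w$ (disjoint supports) exactly when $v\neq w$ and $vw\notin E$;
\item $\rho_v$ and $\rho_w$ have overlapping support whenever $v=w$ or $vw\in E$.
\end{itemize}
Concretely, take $\mathcal{K}=\mathbb{C}^{V\cup E}$ and let $\rho_v$ be the normalized projection onto $\mathrm{span}\{|v\rangle\}\cup\{|e\rangle : v\in e\}$. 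Computability of $G\mapsto\Phi_G$ is then immediate.

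\textbf{Unfolding the code.} Next I unfold \Cref{def: max-EA}. A one-shot entanglement-assisted code of size $t$ consists of:
\begin{itemize}
\item a maximally entangled state $|\phi_d\rangle$ on $\mathbb{C}^d\otimes\mathbb{C}^d$;
\item for each message $i\in[t]$, a projective measurement $\{P^i_v\}_{v\in V}$ on Alice's half, with the outcome $v$ fed into $\Phi_G$;
\item a projective decoding $\{Q_j\}$ on $\mathcal{K}\otimes\mathbb{C}^d$.
\end{itemize}
The zero-error condition says that for $i\neq j$ the post-measurement states $\rho_v\otimes \mathrm{Tr}_A[(P^i_v\otimes I)\phi_d]$ and $\rho_w\otimes \mathrm{Tr}_A[(P^j_w\otimes I)\phi_d]$ are orthogonal. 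Using $\mathrm{Tr}_A[(P\otimes I)\phi_d]=\tfrac1d P^T$, this becomes a purely algebraic condition: whenever $\rho_v\not\perp\rho_w$, i.e.\ $v=w$ or $vw\in E$, we need $P^i_v P^j_w=0$ (taking transposes). Conversely, given families of projections satisfying these conditions, the states for different messages have orthogonal supports, so a projective decoder onto those supports exists.

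\textbf{Matching with $\alpha_q(G)$.} Now I compare with the known characterisation of perfect quantum strategies for $\mathrm{IS}(G,t)$ from \cite{MRV15}. By the standard synchronous-game argument, such a strategy may be taken to use a maximally entangled state (finite-dimensional tracial state) and projective measurements $\{P^i_v\}$, and it satisfies:
\begin{itemize}
\item $\sum_v P^i_v=I$ for each $i$;
\item $P^i_vP^j_v=0$ for $i\neq j$;
\item $P^i_vP^j_w=0$ for $i\neq j$ and $vw\in E$;
\item the same-index condition $P^i_vP^i_w=0$ for $v\neq w$, which is automatic from projectivity.
\end{itemize}
These are exactly the code conditions above, so perfect strategies and PME codes of size $t$ correspond. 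That gives both inequalities, hence $C^{(1)}_{0,\PME}(\Phi_G)=\log\alpha_q(G)$.

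\textbf{Main obstacle.} The delicate step is the first direction. A general PME code may use a decoder and input measurements not obviously of the synchronous form, and must be turned into a strategy in which Bob's operators are the transposes of Alice's. I would handle this with the transpose trick on the maximally entangled state and the tracial structure: define Bob's answer operators as $(P^i_v)^T$ and check that the orthogonality conditions force the game's consistency relations. I also need to verify that the chosen $\rho_v$ make non-orthogonality coincide exactly with adjacency-or-equality. This holds in the construction because $\rho_v$ and $\rho_w$ share the basis vector $|vw\rangle$ exactly when $vw\in E$.
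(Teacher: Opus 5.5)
Your proof is correct. Apart from the choice of $\Phi_G$, it follows the paper's argument. In both, the zero-error condition becomes $0=\mathrm{Tr}(\sigma_i\sigma_j)=\frac{1}{d^2}\sum_{v,w}\mathrm{Tr}(\rho_v\rho_w)\,\mathrm{Tr}(P^i_vP^j_w)$, a sum of nonnegative terms. This forces $P^i_vP^j_w=0$ exactly on the pairs with $\mathrm{Tr}(\rho_v\rho_w)>0$. That condition is then matched with the PVM characterisation of perfect strategies for $\mathrm{IS}(G,t)$. It is this nonnegativity that justifies your passage from $\sigma_i\perp\sigma_j$ to orthogonality of the individual summands, and you should say so explicitly.

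The genuine difference is the channel. The paper builds pure outputs $\ket{\tilde{\psi}_v}\in\C^V$ from a Cholesky factorisation of $\alpha I+A$, where $\alpha=|\lambda_{\min}(A)|$. You instead use the normalised projection onto $\mathrm{span}\{\ket{v}\}\cup\{\ket{e}: e\ni v\}$ in $\C^{V\cup E}$. Both proofs use only one property: $\mathrm{Tr}(\rho_v\rho_w)>0$ if and only if $v=w$ or $vw\in E$. Your construction satisfies this, since distinct $\rho_v,\rho_w$ share exactly the basis vector $\ket{vw}$ when $vw\in E$ and nothing otherwise. Your channel has rational entries and is transparently computable, with no eigenvalue or square-root computation. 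The cost is mixed outputs in dimension $|V|+|E|$, whereas the paper gets pure outputs in dimension $|V|$.

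The step you flag as the main obstacle is not one. By the definition of $C^{(1)}_{0,\PME}$, the encoders are already PVMs acting on a maximally entangled state. So once you have $P^i_vP^j_w=0$ for $i\neq j$ and $v=w$ or $vw\in E$, the synchronous characterisation recalled in the preliminaries gives $\alpha_q(G)\geq t$ immediately.

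If you want an explicit perfect strategy, let Alice measure $\{P^i_v\}$ and Bob measure $\{(P^j_w)^{T}\}$ on the maximally entangled state. Then $\langle\phi_d|P\otimes Q^{T}|\phi_d\rangle=\frac{1}{d}\mathrm{Tr}(PQ)$, so every losing answer pair has probability zero. The decoder likewise plays no role: the definition only requires perfect distinguishability of the $\sigma_i$, and you correctly note that this is equivalent to the existence of a projective decoder.
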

\Cref{theorem:main=inf} yields a reduction from the problem of computing the quantum independence number of graphs to the problem of computing $C^{(1)}_{0,\PME}$ of c-q channels. As a consequence, given any c-q channel $\Phi$ and a positive integer $t$, determine whether $C^{(1)}_{0,\PME}(\Phi)\geq \log(t)$ is \textsf{RE}-hard.

\paragraph{Classical-to-quantum channels} A c-q channel $\Phi$ is specified by a finite alphabet $V$ and a family of output densities $\{\rho_v\}_{v\in V}$. Its confusability graph\footnote{In fact, this ``classical" graph also matches the noncommutative graph associated with the quantum channel $\Phi$.} $G$ is the graph with vertices $V$, in which distinct vertives $v,w\in V$ are adjacent whenever $\Tr(\rho_v\rho_w)\neq 0$. Conversely, given any graph $G$, we can construct a c-q channel $\Phi_G$ whose confusability graph is exactly $G$. This is precisely the mapping from graphs to channels used in \cref{theorem:main=inf}. 

The key point is that, for the channels $\Phi_G$, transmitting $t$ bits in our restricted entanglement-assisted setting is equivalent to winning the independent set game $\mathrm{IS}(G,t)$ perfectly. More precisely, we establish a one-to-one correspondence between $(\ket{\psi},\mc{E}_i,1\leq i\leq t)$, the state and encoding channels that achieve $C^{(1)}_{0,\PME}(\Phi_G)$, and $(\ket{\psi},\{P^i_v\}_{v\in V},1\leq i\leq t)$, the state and measurements that win with certainty for $\mathrm{IS}(G,t)$. This correspondence yields \cref{theorem:main=inf}. 

The independent set game $\mathrm{IS}(G,t)$ belongs to a broader class of \emph{synchronous games}, whose perfect strategies are well-understood and admit a particular form: the shared state is maximally entangled, and the measurements are projection-valued. These features are precisely what restrict our results to only pertain to the capacity $C^{(1)}_{0,\PME}$ as opposed to $C^{(1)}_{0,E}$.

\subsection{Parallel developments and conjectures in undecidability }\label{subsec:parallel}
A recurring theme in recent quantum information theory is that quantities with clear operational meaning can nonetheless exhibit extreme computational hardness. The most developed examples are values of nonlocal games: their complexity theory has progressed from initial hardness results~\cite{IV12,NW19} to full undecidability~\cite{Slo19,Slo20,JNV+21,MNY22,NZ23,noise25,CE25,TV25,FKM+25,Lin25,MSSV25,CMPS25}. Quantum channel capacities are another natural family of operational quantities, but their complexity landscape remains sparsely understood. In this light, our results may therefore be seen as early steps toward a complexity-theoretic understanding of quantum channel capacities paralleling the development that took place for nonlocal games.

In the study of nonlocal games, the \emph{quantum value} $\omega_q(\mc{G})$ is defined as the supremal winning probability over all finite-dimensional quantum strategies, while the \emph{commuting operator value} $\omega_{co}(\mc{G})$ is the supremal winning probability over all strategies in the commuting operator model. Both values admit convergent hierarchies with computable values at each level. For $\omega_q(\mc{G})$, one may restrict to strategies of local dimension $n$, obtaining optimal values $\omega_{\FD}^{(n)}(\mc{G})$ satisfying
\begin{equation*}
   \omega_{\FD}^{(1)}(\mc{G})\leq \omega_{\FD}^{(2)}(\mc{G})\leq \cdots,\quad  \lim_{n\rightarrow \infty}\omega_{\FD}^{(n)}(\mc{G})=\omega_q(\mc{G}).
\end{equation*}
For $\omega_{co}(\mc{G})$, the NPA hierarchy \cite{NPA08} provides a sequence of semidefinite relaxations with optimal values $\omega_{\NPA}^{(n)}(\mc{G})$ satisfying
\begin{equation*}
    \omega_{\NPA}^{(1)}(\mc{G}) \geq \omega_{\NPA}^{(2)}(\mc{G})\geq \cdots,\quad \lim_{n\rightarrow\infty}\omega_{\NPA}^{(n)}(\mc{G})=\omega_{co}(\mc{G}).
\end{equation*}
Since both hierarchies converge (from below to $\omega_q$, and from above to $\omega_{co}$), a natural question is whether the game values are always attained at some finite level.
\begin{question}\label{question3}
    Does there exist a nonlocal game $\mc{G}$ such that for every $n\in\N$, $\omega_{\FD}^{(n)}(\mc{G})< \omega_q(\mc{G})$ (resp. $\omega_{\NPA}^{(n)}(\mc{G})> \omega_{co}(\mc{G})$)?
\end{question}
Earlier results gave strong evidence in this direction. Slofstra~\cite{Slo11}, and Harrow, Natarajan, and Wu~\cite{HNW17} proved that for every level $n$ there exists a nonlocal game $\mc{G}$ such that $\omega_{\FD}^{(n)}(\mc{G})< \omega_q(\mc{G})$, and (respectively) a game $\mc{G}$ such that $\omega_{\NPA}^{(n)}(\mc{G})> \omega_{co}(\mc{G})$. Although these results do not answer \Cref{question3}, they suggest that attaining these values may require unbounded levels.

The eventual resolution of \Cref{question3} is indeed closely related to undecidability. Slofstra~\cite{Slo19} answered \Cref{question3} affirmatively for $\omega_q$ by constructing a linear system game $\mc{G}$ such that $\omega_q(\mc{G})=1$ but $\omega_{\FD}^{(n)}(\mc{G})<1$ for all $n\in\N$. Using similar techniques, he also showed that, given a nonlocal game $\mc{G}$, it is undecidable to determine if there exists $n$ such that $\omega_{\FD}^{(n)}(\mc{G})=1$.  More recently, \cite{FKM+25} answered \cref{question3} affirmatively for $\omega_{co}$. They prove that the decision problem
\begin{quote}
    Given a nonlocal game $\mc{G}$, is $\omega_{co}(\mc{G})>1/2$ or $\leq 1/2$.
\end{quote}
is \textsf{RE}-hard. If the NPA hierarchy always attains the commuting operator value at some finite level, this problem would lie in \textsf{coRE}, contradicting \textsf{RE}-hardness.

By the superadditivity of the one-shot quantum capacity $Q^{(1)}$, the regularised formula for quantum capacity also gives rise to a natural convergent hierarchy. 
\begin{equation*}
    Q_{reg}^{(1)}(\Phi)\leq Q_{reg}^{(2)}(\Phi)\leq \cdots,\quad \lim_{n\rightarrow \infty} Q_{reg}^{(n)}(\Phi) = Q(\Phi),
\end{equation*}
where $Q_{reg}^{(n)}(\Phi):=\frac{Q^{(1)}(\Phi^{\otimes 2^n})}{2^n}$. Our \cref{question2}, which asks if there exists a quantum channel $\Phi$ such that for every $n\in\N$, $Q_{reg}^{(n)}(\Phi) < Q(\Phi)$, is in direct analogy with \Cref{question3}.

As in the nonlocal-game setting, there are already partial results pointing in answering \Cref{question2}.
Cubitt, Elkouss, Matthews, Ozols, Pérez-García, and Strelchuk in \cite{CEM+15} showed that for every $n\in\N$, there exists a quantum channel $\Phi$ such that $Q_{reg}^{(n)}(\Phi) < Q(\Phi)$. For a full resolution of \cref{question2}, one needs to alternate the quantifiers showing there exists a channel $\Phi$ such that for all $n \in \N$, $Q_{reg}^{(n)}(\Phi) < Q(\Phi)$.

Unlike in the setting of non-local games however, at each level the function $Q_{reg}^{(n)}(\Phi)$ may not be computable: computing $Q_{reg}^{(n)}(\Phi)$ involves maximizing a continuous function on a compact set, which in general can be accomplished in the second level of the arithmetical hierarchy $\Pi_2^0$. Nonetheless, determining the computational hardness of the decision problem
\begin{equation*}\tag{Qcap}
    \text{Given a quantum channel $\Phi$, determine if $Q(\Phi)\geq \frac{1}{2}$ or $Q(\Phi)< \frac{1}{2}$.}
\end{equation*}
 can still potentially resolve \cref{question2}. In fact, assuming a negative answer to \cref{question2}, then (QCap) would be in $\Sigma_3^0$. A positive answer to \cref{question2} would then follow from the following conjecture:
\begin{conjecture}\label{conjecture1}
   (QCap) is $\Pi_3^0$-hard.
\end{conjecture}
To the best of our knowledge, the smallest upper bound on (QCap) is the 4th level of the arithmetical hierarchy $\Pi_4^0$.

We do note that $Q_{reg}^{(n)}(\Phi)$ can be approximated to within arbitrary precision, so the gapped version
\begin{equation*}\tag{GapQcap}
    \text{Given a quantum channel $\Phi$, is $Q(\Phi)\geq 3/4$ or $\leq 1/4$.}
\end{equation*}
is in \textsf{RE}.
 We conjecture that
\begin{conjecture}  
 (GapQcap) is \textsf{RE}-complete.
\end{conjecture}

\subsection{Discussion and outlook}\label{subsec:outlook}

In this work, we initiate a formal study into the computational hardness of channel capacities for quantum channels. A natural long-term objective here is to understand whether the quantum capacity is uncomputable for general channels. This objective is anticipated to be extremely challenging, and in the discussion below we highlight both the insights gained toward this goal and the respects in which the present results fall short.

\paragraph{Encoding of the input channels} First, our result in \cref{thm:QMA} considers a formulation in which the input is a quantum circuit $C$ implementing a channel $\Phi_C$. In this setting, we rule out the existence of an efficient algorithm that, given $C$, approximates the quantum capacity $Q(\Phi_C)$ even within a constant error. On the other hand, in several areas of quantum information theory, quantum channels are often studied via explicit mathematical descriptions or formulas, rather than through circuit representations. Furthermore, succinct input representations of the kind used here are well known to increase the apparent computational complexity of a problem. However, if the quantum capacity is indeed uncomputable, then such differences in how channels are specified become largely immaterial. In particular,  we show that the capacity $C^{(1)}_{0,\PME}$ of a c-q channel is uncomputable. So in \cref{thm:RE-inf} we need not specify if a channel is encoded as a quantum circuit, or via some other computable encoding using mathematical formulas, such as descriptions of Kraus operators. 

\paragraph{Entanglement-assisted zero-error capacities and graph parameters} Since Shannon's introduction of zero-error communication through confusability graphs, graph parameters have played a central role in the study of one-shot zero-error capacities. In the entanglement-assisted setting, this connection was developed further in the work of Cubitt, Leung, Matthews, and Winter~\cite{CLMW10}, who showed that shared entanglement can increase the one-shot zero-error capacity of a classical channel, and in the work of Beigi~\cite{Bei10}, who related the entanglement-assisted zero-error capacity to the Lov\'asz $\vartheta$-function.

Particularly relevant to us is the latter work of Man\v{c}inska and Roberson on quantum graph homomorphisms~\cite{MR16}. They introduced the quantum independence number, and showed that it is closely related to the entanglement-assisted one-shot zero-error capacity $C^{(1)}_{0,E}$ for classical channels. More precisely, if a classical channel $\mc{N}$ has confusability graph $G$, then their Theorem 5.1 shows that $C^{(1)}_{0,E}(\mc{N})$ is at least $\log(\alpha_q(G))$, with equality whenever the capacity can be achieved using projective measurements on a maximally entangled state. Our \cref{theorem:main=inf} can be viewed as taking this idea one step further by considering quantum channels. In both \cite{MR16} and our setting, the relevant protocols come from perfect quantum strategies for the independent set game, and therefore naturally involve maximally entangled states and projective measurements. \cite{MR16} already asked whether (for classical channels) such restrictions can always be removed. For general quantum channels, however, later work of Stahlke~\cite{Sta16} shows that one can not expect this in full generality: there exists quantum channel $\Phi$ such that only using a non-maximally entangled state can achieve $C^{(1)}_{0,E}(\Phi)$. While the undecidability of the entanglement-assisted capacity remains open, the discussion here naturally motivates considering maximal entanglement and projective measurements leading to the special case, $C^{(1)}_{0,\PME}$.

\paragraph{Insights to Shannon capacity} The Shannon capacity \cite{Sha56} as we know is a central concept bridging zero-error information theory and graph theory, the computability of which has remained notoriously difficult (see \cite{LS25} for a review).
The beautiful work of Lovász shows that the Lovász $\vartheta$-function can be computed in polynomial time in $n$ (size of the adjacency matrix), and is an upper bound on the Shannon capacity, whose computation requires the computation of an infinite series of the independence number, which is a known \textsf{NP}-hard problem \cite{Lov79}. Moreover, even fundamental questions surrounding the Shannon capacity remain answered, the best-known being the determination of the Shannon capacity of odd cycle
graphs of length greater than 5 \cite{Lov79,dBBZ24,PS19,Boh04}. Such evidence for extreme uncomputability of even the Shannon capacity unravels the elusive nature of the entanglement-assisted case in full generality.

\paragraph{Comparison with prior works} Our \textsf{QMA}-hardness result of the quantum capacity begs a comparison with one of the results of Beigi and Shor ~\cite{BS07}, which shows that computing the classical (Holevo) capacity of a quantum channel is \textsf{NP}-complete. Notably, to the best of out knowledge, it is the only other significant complexity-theoretic lower bound known for quantum channel capacities. This general lack of formal hardness results highlights how fundamental questions about complexity of quantum channel capacities, including \Cref{Q:decidability}, have so far remained largely unexplored.

\subsection{Outline} \cref{sec: intro} is the introduction. We present the main results \cref{thm:main-QMA-hardness} and \cref{cor:QMA-hard} on the \textsf{QMA}-hardness of computing $Q(\Phi)$ in \cref{sec: QMA-hard} for a quantum channel $\Phi$, and \cref{theorem:main=} and \cref{theorem:RE} on the \textsf{RE}-hardness of computing $C^{(1)}_{0, \PME}(\Phi)$ in \cref{sec: RE-hard}. We review relevant background to follow the proofs of these results in \cref{sec: prelims}, namely quantum channel capacities and circuits, aspects of computational complexity, and the quantum independence number of graphs. 

\paragraph{Acknowledgements} We thank Eric Culf, Omar Fawzi, Laura Man{\v{c}}inska, and William Slofstra for helpful discussions on the complexity of quantum channel capacities, circuit and direct sum constructions of quantum channels, and undecidability of nonlocal games admitting a perfect quantum strategy. AB thanks Anne Broadbent for invaluable support. We acknowledge the support of the Natural Sciences and Engineering Research Council of Canada (NSERC)(ALLRP-578455-2022), the Air Force Office of Scientific Research under award number FA9550-20-1-0375 and of the Canada Research Chairs Program (CRC-2023-00173). AM acknowledges the support of NSERC DG 2024-06049. YZ~is supported by VILLUM FONDEN via QMATH Centre of Excellence grant number 10059 and Villum Young Investigator grant number 37532.

\section{Preliminaries} \label{sec: prelims}

\paragraph{Notation} For $n\in\N$, write $[n]=\{1,2,\ldots,n\}$. For a function $f:\N\rightarrow[0,1]$, write $f:\N\rightarrow(0,1)_{\exp}$ to mean that there exist $N,k>0$ such that $2^{-n^k}<f(n)<1-2^{-n^k}$ for all $n\geq N$. We write $\log$ for the base-$2$ logarithm. We denote registers by uppercase Latin letters $A,B,C,\ldots$; and we denote Hilbert spaces by uppercase script letters $\mc{H},\mc{K},\mc{L},\ldots$. We always assume registers are finite sets and Hilbert spaces are finite-dimensional. We denote an independent copy of a register $A$ by $A'$. Given a register $A$, the Hilbert space spanned by $A$ is $\mc{H}_A=\spn\!\!\set*{\ket{a}}{a\in A}\cong\C^{|A|}$. We indicate that an operator or vector is on register $A$ with a subscript $A$, omitting when clear from context. Given two registers $A$ and $B$, we write $AB$ for their cartesian product, and treat the isomorphism $\mc{H}_{AB}\cong\mc{H}_A\otimes\mc{H}_B$ implicitly. Given finite-dimensional Hilbert spaces $\mc{H}$ and $\mc{K}$, we write $B(\mc{H},\mc{K})$ for the set of all linear operators $\mc{H}\rightarrow\mc{K}$, $B(\mc{H})=B(\mc{H},\mc{H})$, $\mc{U}(\mc{H})\subseteq B(\mc{H})$ for the subset of unitary operators, and $D(\mc{H})\subseteq B(\mc{H})$ for the subset of density operators where $D(\mc{H}) \coloneqq \{\rho \in B(\mc{H}), \rho \geq 0, \Tr{\rho} = 1\}$. 
Write $\mc{U}(d)=\mc{U}(\C^{d})$.
An operator $A \in B(\mc{H})$ is positive, denoted by $A \geq 0$, if $A = \sqrt{A^{\dagger}A}$, where $(\cdot)^{\dagger}$ represents the Hermitian conjugate. We denote by $\text{id}_R$ the identity map on $B(\mc{H}_R)$. We write $\Tr$ for the trace on $B(\mc{H})$. On $B(\mc{H}_{AB})$, we write the partial trace $\Tr_{B}=\id\otimes\Tr$. For $\rho_{AB}\in B(\mc{H}_{AB})$, write $\rho_A=\Tr_B(\rho_{AB})$. We denote the $L_1$-norm by $\norm{\cdot}_1$ and the trace norm by $\frac{1}{2}\norm{\cdot}_1$. We denote the operator norm by $\norm{\cdot}$. The identity operator is denoted by $I$. We denote the canonical maximally-entangled state $\ket{\phi^+}_{AA'}=\frac{1}{\sqrt{|A|}}\sum_{a\in A}\ket{a}\otimes\ket{a}\in\mc{H}_{AA'}$. We write the maximally-mixed state on a register $A$ as $\omega_A=\frac{1}{|A|}\sum_{a\in A}\ketbra{a}\in D(\mc{H}_A)$. 
A positive-operator-valued measurement (POVM) is a finite set of positive operators $\{P_i\}_{i\in I}$ such that $\sum_iP_i=\mds{1}$, and a projection-valued measurement (PVM) is a POVM where all the elements are projectors. A quantum channel is a completely positive trace-preserving (CPTP) map $\Phi:B(\mc{H})\rightarrow B(\mc{K})$. We denote the Choi-Jamio\l{}kowski isomorphism $J:B(B(\mc{H}_A),B(\mc{H}_B))\rightarrow B(\mc{H}_{AB})$, $J(\Phi)=(\id\otimes\Phi)(\ketbra{\phi^+}_{AA'})$. Note that if is $\Phi$ is a quantum channel, $J(\Phi)\in D(\mc{H}_{AB})$, called the Choi state.
The Kraus representation of a quantum channel $\Phi: B(\mc{H}) \to B(\mc{H})$ is $\sum \limits_{i = 0}^{d - 1} \Phi(\rho) = A_i \rho A_i^{\dagger}$ where $A_i \in B(\mc{H})$ are the Kraus operators such that $\sum \limits_{i = 0 }^{d - 1} A_i^{\dagger} A_i = I,$ and $d = \text{dim}(\mc{H})$. For $\Phi: B(\mc{H}_A) \to B(\mc{H}_B)$, by Stinespring's dilation theorem, there exists an isometry $V: A \to BE$ known as the Stinespring isometry such that $\Phi (\rho) = \Tr_E (V \rho V^{\dagger})$. The complementary channel $\Phi^c: B(\mc{H}_A) \to B(\mc{H}_E)$ is given by $\Phi^c (\rho) = \Tr_B (V \rho V^{\dagger}).$ 
Note that the Kraus representation is not unique. However, for a channel $\Phi$ we can always find its minimal Stinespring representation and corresponding minimal Kraus representation. Every Stinespring representation is equivalent to a minimal one via an isometry on the environment denoted by register $E$. The diamond norm of a linear map $\Phi: B(\mc{H}_A) \to B(\mc{H}_B)$ is defined as $\norm{\Phi}_{\diamond} \coloneqq \sup \limits_{\norm{M}_1 \leq 1} \norm{\Phi (M_{RA})}_1$, with supremum over all $M \in B(\mc{H}_{RA})$ with $d_R = d_A$ and $\norm{M}_1 \leq 1.$ 

\subsection{Quantum channel capacities and circuits} \label{sec:channels}

In this section, we introduce various capacities associated with quantum channels. In the process, we introduce the relevant entropic quantities and their properties, and the notion of circuit construction of quantum channels.

We focus on the quantum capacity and the entanglement-assisted zero-error capacity in this work. We recall that the von Neumann entropy of a quantum state $\rho$ is defined as $S(\rho) \coloneqq - \Tr(\rho \log \rho).$ The coherent information \cite{Bar98}, or equivalently, the one-shot quantum capacity is an entropic quantity denoted by $Q^{(1)}(\Phi)$. It characterises the amount of quantum information that can be transmitted over a quantum channel $\Phi$ (in a single use).

\begin{definition}[Quantum capacity] \label{def: qcap}
     The \textbf{quantum capacity} of a channel $\Phi$ is given by 
     \begin{equation}
     Q(\Phi) \coloneqq \lim \limits_{n \to \infty} \frac{Q^{(1)}\left(\Phi^{\otimes n}\right)}{n}
     \end{equation}
     where $$Q^{(1)} (\Phi) \coloneqq \max \limits_{\rho} I_c (\Phi, \rho),$$ and $I_c \left(\Phi, \rho\right) = S\left[(\Phi(\rho)\right] -~S\left[\Phi^c(\rho)\right]$ is the entropy exchanged between the channel $\Phi$ and its environment.
\end{definition}
A channel $\mc{N}$ is said to be \textbf{degradable} if there exists another channel $\mc{D}$, known as the \textit{degrading map}, such that $\mc{D} \circ \mc{N} = \mc{N}^c.$ Similarly, a channel $\mc{M}$ is said to be \textbf{anti-degradable} if there exists an \textit{anti-degrading map} $\mc{L}$ such that $\mc{L} \circ \mc{M}^c = \mc{M}.$ Intuitively, a degradable channel allows its receiver (Bob) to simulate the state of the environment (Eve) for any input supplied by the sender (Alice). Similarly, an anti-degradable channel allows the environment to simulate the receiver's state for any input. For degradable channels, the quantum capacity is additive and hence $Q(\mc{N}) = Q^{(1)}(\mc{N}).$ For anti-degradable channels, the quantum capacity $Q(\mc{M}) = 0.$ Let $\text{id}_2$ be the identity channel on $B(\C^2)$. Since this channel sends the entire input to the receiver, the environment does not receive any information. It is easy to see that the action of the complementary channel $\text{id}^c_2 (\rho) = \Tr(\rho)$, and hence $I_c(\text{id}_2, \rho) = S(\rho)$. This implies that the coherent information $Q^{(1)}(\text{id}_2) = \max \limits_{\rho} S(\rho) = \log 2 = 1.$

The notion of \textbf{entanglement assistance} means that prior to communication through a noisy channel, the sender and receiver are allowed to share an entangled state. Sharing of prior entanglement has been shown to increase the capacity in certain cases, as well as to have no effect in others, see \cite{watrous2018theory} for a summary. Indeed, determining the information transmission capacities of quantum channels is still a fundamentally unresolved question. The \textbf{entanglement-assisted one-shot zero-error classical capacity}, denoted by $C^{(1)}_{0,E}(\Phi)$, is the number of bits perfectly transmitted in a single use of the channel $\Phi$ in the presence of preshared entanglement (not necessarily maximal) between the sender and receiver.

\begin{definition}[Entanglement-assisted one-shot zero-error classical capacity] \label{def: EA-cap}
   The entanglement-assisted one-shot zero-error classical capacity of a CPTP map $\Phi: B(\mc{H}_A) \to B(\mc{H}_B)$ is defined as

\begin{equation}
    C^{(1)}_{0,E}(\Phi) \coloneqq \sup \limits_{\ket{\psi}, M} \log \lvert M \rvert, 
\end{equation}
where the supremum is over all pure bipartite (entangled) states $\ket{\psi} \in D(\mc{H}_{{A_0}} \otimes \mc{H}_{{B_0}})$ and collections $M$ of CPTP maps $\{\mc{E}_m : B(\mc{H}_{A_0}) \to B(\mc{H}_A)\}_{m = 1}^{|M|}$ such that $$\forall~ m \neq m' : (\Phi \circ \mc{E}_m \otimes \text{id}_{B_0}) (\ket{\psi}) \perp (\Phi \circ \mc{E}_{m'} \otimes \text{id}_{B_0}) (\ket{\psi}).$$ 
\end{definition}

If we restrict the shared state between the sender and receiver to only be a maximally entangled state $\phi^+_{AA'} = \frac{1}{\sqrt{|A|}}\sum_{a\in A}\ket{a}\otimes\ket{a}\in\mc{H}_{AA'}$, then we can define the corresponding \textbf{maximal-entanglement-assisted} one-shot zero-error classical capacity. Along with another condition for projective measurements, we obtain the following capacity notion relevant for our work.

\begin{definition}[$\PME$-assisted one-shot zero-error classical capacity] \label{def: max-EA}
   The $\PME$-assisted one-shot zero-error classical capacity of a classical-to-quantum channel $\Phi: B(\C^{\Sigma}) \to B(\mc{H}_B)$ with the input alphabet $\Sigma$ is defined as
\begin{equation}
    C^{(1)}_{0,\PME}(\Phi) \coloneqq \sup \limits_{\phi^+, M} \log \lvert M \rvert, 
\end{equation}
where the supremum is over all pairs $(\phi^+, M)$ satisfying
\begin{itemize}
    \item $\phi^+ \in D(\mc{H}_{A_0} \otimes \mc{H}_{A_0})$ is a maximally entangled state, and
    \item $M$ consists of  quantum-to-classical channels $\{\mc{E}_m : B(\mc{H}_{A_0}) \to B(\C^{\Sigma})\}_{m = 1}^{|M|}\}$ with the output alphabet $\Sigma$ such that $\mc{E}_m(\rho)=\sum_{a\in\Sigma}\Tr(P^m_a\rho)\ketbra{a}{a}$ for some PVMs $\{P^m_a:a\in\Sigma\},1\leq m\leq \abs{M}$,
\end{itemize}
and $$\forall~ m \neq m' : (\Phi \circ \mc{E}_m \otimes \text{id}_{A_0}) (\phi^+) \perp (\Phi \circ \mc{E}_{m'} \otimes \text{id}_{A_0}) (\phi^+).$$
\end{definition}

\begin{remark} \label{rem: gates}
    The action of a quantum channel may also be represented as a quantum circuit. The circuit model of quantum computation describes unitary quantum operations (channels) wherein a universal gate set $S$ such as the Clifford$+ T$ gate set $\{CNOT, H, T\}$ can approximate any unitary to arbitrary precision by composing finitely many tensors of the elementary gates. To completely describe any quantum channel, two non-unitary operations are also required: the partial trace $\Tr : B(\mbb{C}^2) \to \mbb{C}$ on one qubit, and single-qubit state preparation in the computational basis, given by the mapping $\C \to B(\mbb{C}^2), \lambda \mapsto \lambda \ketbra{0}.$ Due to the Stinespring dilation theorem, the composite gate set $S \cup \{\Tr, \ketbra{0}\}$ can approximate any quantum channel to arbitrary precision. The Stinespring dilation also enables constructing the circuit of a quantum channel as a sequence of steps: single-qubit preparation, followed by the action of unitary gates, and finally a partial trace. In this work, we will assume that our quantum circuits for channels are always in this canonical form. See \cite{CM23} for more on the circuit construction of channels.
\end{remark}

\subsection{Complexity theory} \label{sec: cmplx-th}

In this work, we study decision problems related to the computability of various channel capacities. Notably, we show that computing the quantum capacity of a quantum channel is \textsf{QMA}-hard. and computing the entanglement-assisted zero-error classical capacity of a classical-quantum channel restricted to maximal entanglement is $\textsf{RE}$-hard. We formally introduce the exact meaning of such a statement about hardness of computing in this section.

A \textbf{Turing machine} is a tuple $M = (k, \Sigma, S, \delta)$ where $k \in \mbb{Z}_{\geq 0}$ is called the number of work tapes, $\Sigma$ is a finite set called the set of symbols containing a blank symbol $\Box$, $S$ is a finite set called the set of states containing the initial state \texttt{start} and the halt state $\texttt{halt},$ and $\delta: S \times \Sigma^{k + 2} \to S \times \Sigma^{k + 2} \times \{-1, 0, 1\}^{k + 2}$ is a function called the transition function. A Turing machine is an abstraction of a computation. A function $f$ is \textbf{computable} if there is a Turing machine $M$ such that $f(x) = M(x)$ for all $x$. A \textbf{language} is a subset $L \subseteq \{0, 1\}^{\ast}$ where $\Sigma^{\ast}$ denotes the set of words in $\Sigma.$ A Turing machine \textbf{decides} a language $L$ if $M(x) = 1$ when $x \in L$ and $M(x) = 0$ when $x \notin L.$ A \textbf{promise problem}
$P$ is a pair of disjoint subsets $Y(P), N(P) \subseteq \{0, 1\}^{\ast}.$
The elements of $Y(P) \cup N(P)$ are called the \textbf{instances} of $P$, and the elements of $Y(P)$ are the \textbf{yes instances} and the elements of $N(P)$ are the \textbf{no instances}. We write $x \in P$ to mean $x \in Y(P) \cup N(P)$. Every language $L$ corresponds to the promise problem $(L, L^c)$, so we exclusively work with promise problems in the following. A Turing machine $M$ decides $P$ if $M(x) = 1$ for all $x \in Y(P)$ and $M(x) = 0$ for all $x \in N(P)$. A \textbf{complexity class} is a collection of promise problems. A problem $P$ is \textbf{hard} for a complexity class \textsf{C} or \textsf{C}-hard, if every problem in \textsf{C} reduces to $P.$ The problem $P$ is said to be $\textsf{C}$-\textbf{complete} if it is \textsf{C}-hard and $P \in \textsf{C}.$

The complexity class \textsf{QMA} is the quantum analogue of \textsf{NP} and, informally, consists of those promise problems that admit an efficiently verifiable quantum witness. A promise problem $(Y,N)$ is called \textsf{QMA}-hard if, for every promise problem $(Y',N')$ in \textsf{QMA}, there exists a polynomial-time computable function $f : \{0,1\}^* \to \{0,1\}^*$ such that $f(Y') \subseteq Y$ and $f(N') \subseteq N$. A promise problem is called \textsf{QMA}-complete if it is both \textsf{QMA}-hard and in \textsf{QMA}. The canonical example of a \textsf{QMA}-complete is the local Hamiltonian problem \cite{KSV02}.

We define the class \textsf{QMA} as follows.

\begin{definition}[QMA] \label{def: class-qma}
 Let $c, s : \mbb{N} \to (0, 1).$ A promise problem $Y, N \subseteq \{0, 1\}^{\ast}$ is in $\textsf{QMA}_{c, s}$ if there exist polynomials $p, r, l : \mbb{N} \to \mbb{N}$ and a Turing machine $V$ with one input tape and one output tape such that
\begin{itemize}
    \item For all $x \in \{0, 1\}^{\ast}$, $V$ halts on input $x$ in $r(|x|)$ steps and outputs the description of a quantum circuit $V(x)$ (see \cref{fig:QMA}) \begin{figure}[h]
    \centering
    \includegraphics[width=0.7\linewidth]{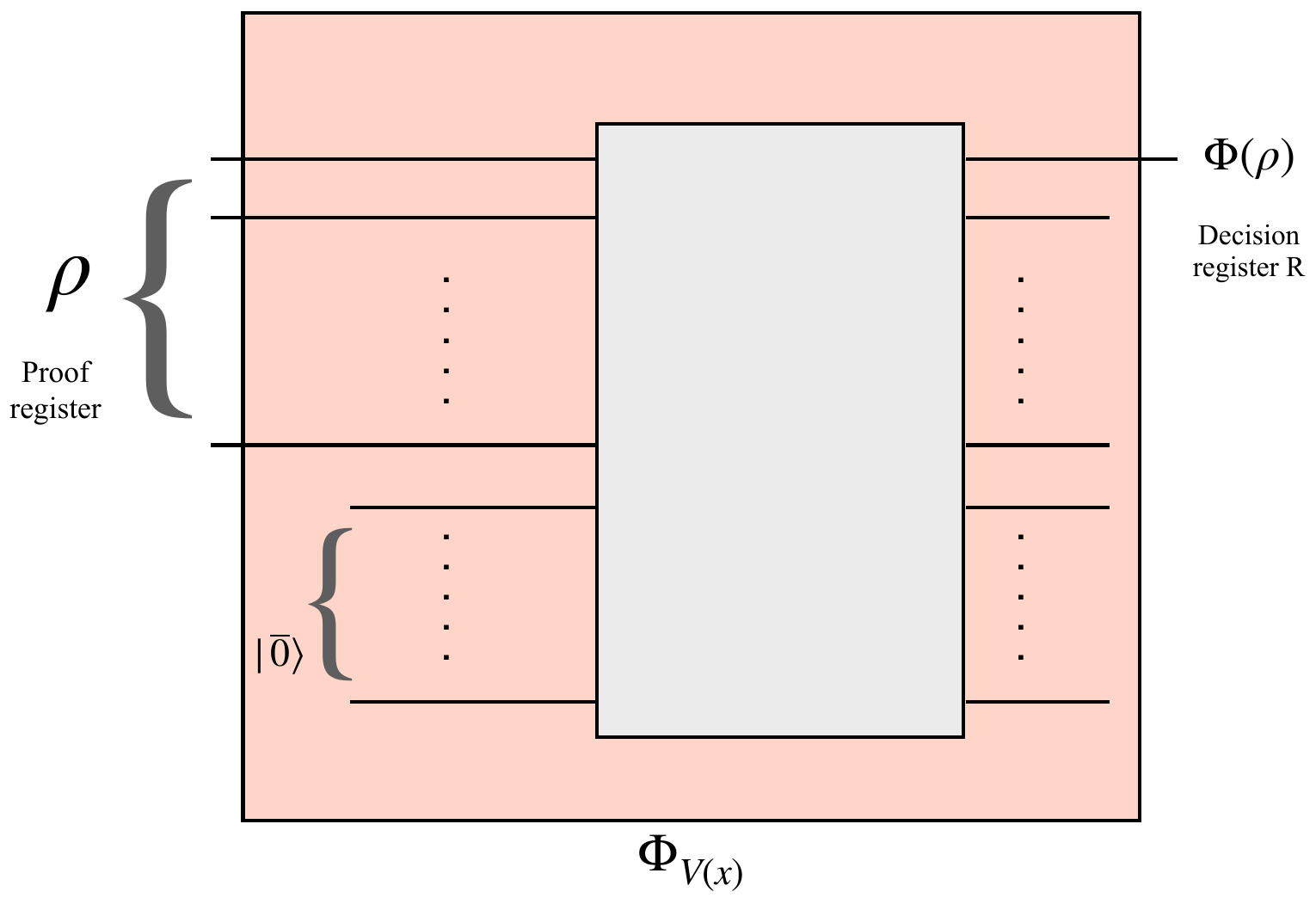}
    \caption{Verification circuit $V(x)$ that implements $\Phi_{V(x)}$.}
    \label{fig:QMA}
\end{figure} for a unitary $U_x:(\C^{2})^{\otimes (p(|x|) + l(|x|))} \to (\C^{2})^{\otimes (p(|x|) + l(|x|))}$, where the proof state is in $ (\C^{2})^{\otimes p(|x|)}$ and the ancillary register is initialized as $\ket{0}^{\otimes l(|x|)} \in (\C^{2})^{\otimes l(|x|)}$, such that the following hold: 
    \item For all $x \in Y$, there exists a density $\rho \in D\big((\C^2)^{\otimes p(|x|)}\big)$ such that
    \begin{equation*}
        \Tr\Big(U_x\big(\rho\otimes \ketbra{0}{0}^{\otimes l(\abs{x})} \big)U_x^\dagger (\ketbra{1}{1}\otimes I)   \Big)\geq c(\abs{x}).
    \end{equation*}
    \item For all $x \in N$ and $\rho \in D\big((\C^2)^{\otimes p(|x|)}\big)$
        \begin{equation*}
        \Tr\Big(U_x\big(\rho\otimes \ketbra{0}{0}^{\otimes l(\abs{x})} \big)U_x^\dagger (\ketbra{1}{1}\otimes I)   \Big)\leq s(\abs{x}).
    \end{equation*}
\end{itemize}

\end{definition}

For any $c, s : \mbb{N} \to (0, 1)_{\text{exp}}$, with $c(n)-s(n)=O(1/\text{poly}(n))$, we have that $\textsf{QMA}_{c, s} = \textsf{QMA}_{\frac{3}{4}, \frac{1}{4}}$ \cite{MW05}.

\begin{remark}\label{rem:QMA-Isometry}
    Given unitary $U_x$, as in \cref{def: class-qma}, we let $W_x:(\C^2)^{\otimes p(\abs{x})}\to (\C^{2})^{\otimes (p(|x|) + l(|x|))}$ denote the isometry sending $\ket{\psi}\mapsto U_x(\ket{\psi}\otimes\ket{0}^{\otimes l(\abs{x})})$. Note then that 
\begin{equation*}
    \Tr\Big(U_x\big(\rho\otimes \ketbra{0}{0}^{\otimes l(\abs{x})} \big)U^\dagger (\ketbra{1}{1}\otimes I)   \Big) = \Tr\Big(W_x\rho W_x^\dagger (\ketbra{1}{1}\otimes I)   \Big)= \Tr\Big( W_x^\dagger (\ketbra{1}{1}\otimes I) W_x\rho    \Big).
\end{equation*}
\end{remark}

\textsf{R} is the class of all problems that can be decided with a
Turing machine; $P \in \textsf{R}$ is called \textbf{decidable}, and $P \notin \textsf{R}$ is called \textbf{undecidable}. We define the class \textsf{RE} as follows. 

\begin{definition}[RE] \label{def: class-RE}
    \textsf{RE} is the class of all problems $P$ such that there exists a Turing machine $M$ where if $x \in Y(P), ~M$ halts on input $x$.
\end{definition}

\textsf{MIP} is the class of languages that can be decided by a multi-prover interactive proof system. $\textsf{MIP}(2, 1)$ corresponds to a proof system with two provers and a single round of interaction and is equivalent to families of nonlocal games $F = \{\mc{G}_x\}_x$ where the all-powerful provers attempt to convince the verifier that the word $x \in \{0, 1\}^{\ast}$ belongs to a language $L$. In a two-player nonlocal game $\mc{G}$, Alice and Bob (two players) play against a verifier. During the game, the verifier samples a question pair $(q, q')$ from a finite set $Q \times Q$ according to a probability distribution $\pi$, sends $q$ to Alice, $q'$ to Bob, and they respectively respond with answers $a, a' \in A$, another finite set denoting the answers. Alice and Bob win this round of the game if $V(q, q'; a, a) = 1$ and lose otherwise. Here, $V: Q \times Q' \times A \times A' \to \{0, 1\}$ is a function called the predicate. Importantly, while the setup of the game $\mc{G} = (Q, A, \pi, V)$ is known to everyone, Alice and Bob cannot communicate during the game and thus have no knowledge of each other's question and answer. The goal is to come up with a strategy which maximises their chance of winning. 

The power of $\textsf{MIP}(2,1)$ is captured by the \textit{gapped promise problem} for the family of games $F$: for fixed constants $0 < s < c \leq 1,$ given a game $\mc{G}_x \in F$ decide whether the value of $\mc{G}_x$ is at least $c$, or less than $s$, given the promise that one of the two must hold. This constitutes what is known as the $(c, s)$-gap problem for $F$ with completeness and soundness parameters $c$ and $s$ respectively. Since the provers in \textsf{MIP} cannot communicate, it is natural to ask what happens if they are allowed to share entanglement which gives rise to the complexity class $\textsf{MIP}^{\ast}$ with entangled provers. $\textsf{MIP}^{\ast}$ is captured by the gapped promise problem defined above where the value of a game is replaced by its quantum value \cite{CHTW04}. 

The complexity of $\textsf{MIP}^{\ast},$ the quantum counterpart of $\textsf{MIP}$ turned out to be much harder to determine and increasingly larger until it was shown to be equal to \textsf{RE}, the class of recursively enumerable languages (see \cref{sec: cmplx-th}) in \cite{JNV+21}. More precisely, for any $0 < s < 1$ the so-called \textit{synchronous}\footnote{In a synchronous game, if the players receive the same question, they must produce the same answer in any perfect strategy games.} games have been constructed for which the $(1, s)$-gap$^{\ast}$ problem decides the Halting problem. This result indicates that, in general, approximating the quantum value of synchronous games within any additive constant is undecidable. Immediately prior to this, \cite{Slo19} showed an important result that whether or not a synchronous game has a perfect quantum strategy is undecidable.

\subsection{Quantum independence number} \label{sec: q-ind-num}

In this section, we introduce the classical and quantum independence numbers of a graph, and briefly discuss the computational complexity of computing the latter.

Let $G=(V,E)$ be a graph, where $V$ is a finite set of vertices and $E\subseteq V\times V$ is the set of edges. Throughout this paper, we assume all graphs are \textbf{undirected} and \textbf{simple}, meaning that $(v,v)\neq E$ for all $v\in V$, and $(v,w)\in E$ implies $(w,v)\in E$. A subset $S\subseteq V$ is an \textbf{independent set} if no two distinct vertices in $S$ are adjacent, i.e., $(v,w)\notin E$ for all $v,w\in S$. The (classical) independence number of $G$ is
\begin{equation*}
    \alpha(G):=\max\{\abs{S}: S\subseteq V \text{ is an independent set}\}.
\end{equation*}

Given a graph $G=(V,E)$ and a positive integer $t$, the \textbf{$t$-independent set game }$\mathrm{IS}(G,t)$ of $G$ is a two-player (commonly called Alice and Bob) nonlocal game in which Alice and Bob attempt to convince a referee that $G$ contains an independent set of $t$ vertices. In $\mathrm{IS}(G,t)$, the referee samples $i,j\in [t]$ uniformly and sends $i$ to Alice and $j$ to Bob. Alice and Bob then respond with vertices $v,w\in V$, respectively. The players win if and only if:
\begin{itemize}
    \item if $i= j$, then $v=w$;
    \item if $i\neq j$, then $(v,w)\notin E$.
\end{itemize}
Alice and Bob may agree on a strategy beforehand, but are not allowed to communicate during the game. A strategy is said to be \textbf{perfect} if it wins this game with certainty. 

A \textbf{deterministic classical strategy} consists of two functions $f_A,f_B:[t]\rightarrow V$. On questions $i$ and $j$, Alice outputs $f_A(i)$ and Bob outputs $f_B(j)$. It is not hard to see that $\mathrm{IS}(G,t)$ has a perfect deterministic classical strategy if and only if $\alpha(G)\geq t$. This gives an alternative formulation of $\alpha(G)$: it is the maximal $t$ such that $\mathrm{IS}(G,t)$ has a perfect classical strategy.

A \textbf{quantum strategy} allows Alice and Bob to share entanglement, often leading to a higher winning probability than classically possible. Such a strategy $S$ consists of a quantum state $\ket{\psi}\in \mc{H}_A\otimes\mc{H}_B$, where $\mc{H}_A$ and $\mc{H}_B$ are finite-dimensional Hilbert spaces, and POVMs $\{P^i_v:v\in V\}\subseteq B(\mc{H}_A)$ and $\{Q^j_w:w\in V\}\subseteq B(\mc{H}_B)$ for all $i,j\in [t]$. When the players use this strategy, the probability of answering $v,w$ on questions $i,j$ is
\begin{equation*}
    p_S(v,w|i,j)=\bra{\psi}P^i_v\otimes Q^j_w\ket{\psi}.
\end{equation*}
The \textbf{quantum independence number}  of $G$ is
\begin{equation*}
 \alpha_q(G)=\max \{t: \mathrm{IS}(G,t) \text{ has a perfect quantum strategy}  \}.
\end{equation*}
Quantum independent set games belong to a broader class called \textbf{symmetric synchronous games}.
In a synchronous game, if the players receive the same question, they must produce the same answer in any perfect strategy. A synchronous game is symmetric if exchanging the roles of Alice and Bob does not change the winning rules. Perfect strategies for synchronous games are well-understood~\cite{HMPS17}. In particular, we know that $\mathrm{IS}(G,t)$ has a perfect quantum strategy if and only if there exist finite-dimensional PVMs $\{P^i_v:v\in V\},i\in [t]$ such that $P^i_vP^j_w=0$ whenever answering $v$ and $w$ for questions $i$ and $j$ loses the game. This gives an alternative formulation for the quantum independence number:
\begin{align*}
    \alpha_q(G)&= \max \{ t: \text{there exist finite-dimensional PVMs}~ \{P^i_v:v\in V\}, i\in [t]\\
    & \hspace{2cm}\text{such that for }i\neq j, P^i_vP^j_w=0 \text{ whenever } v=w \text{ or } (v,w)\in E \}.
\end{align*}
The $\textsf{MIP}^{\ast} = \textsf{RE}$ result~\cite{JNV+21} implies that the decision problem
 \begin{quote}
     Given a synchronous game $\mc{G}$, does it have a perfect quantum strategy?
 \end{quote}
is $\textsf{RE}$-complete. We remark that this problem was first shown to be undecidable by Slofstra in \cite{Slo19}. The $\textsf{MIP}^{\ast} = \textsf{RE}$ result strengthens it to $\textsf{RE}$-complete. See also \cite{PS25} for a detailed discussion.

Given any synchronous game $\mc{G}$ with question set $Q$ and answer set $A$, we can construct a graph $G=(V,E)$ with $V=Q\times A$, and $((q,a),(q',a'))\in E$ if and only if answering $a,a'$ on $q,q'$ loses the game $\mc{G}$. Let $t=\abs{Q}$. Then the independent set game $\mathrm{IS}(G,t)$ has a perfect quantum strategy if and only if $\mc{G}$ has a perfect quantum strategy. Since this mapping from synchronous games $\mc{G}$ to independent set games $\mathsf{IS}(G,t)$ is clearly computable, as an immediate consequence: 
 \begin{theorem}\label{thm:Gt}
     The decision problem
 \begin{quote}
     Given a graph $G$ and a positive integer $t$, is $\alpha_q(G)\geq t$?
 \end{quote}
 is $\textsf{RE}$-hard.
 \end{theorem}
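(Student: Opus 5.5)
The plan is a computable many-one reduction from the \textsf{RE}-complete problem of deciding whether a synchronous game $\mc{G}$ has a perfect quantum strategy, which by \cite{JNV+21} (and \cite{Slo19} for undecidability) is \textsf{RE}-hard. Given $\mc{G}$ with question set $Q$, answer set $A$ and predicate $V$, we output the pair $(G,t)$ with $V(G)=Q\times A$, $((q,a),(q',a'))\in E$ iff $(q,a)\neq(q',a')$ and answering $a,a'$ on $q,q'$ loses, and $t=\abs{Q}$. This map is clearly computable. It remains to show that $\mathrm{IS}(G,t)$ has a perfect quantum strategy iff $\mc{G}$ does. Since $\alpha_q(G)\geq t$ is equivalent to $\mathrm{IS}(G,t)$ having a perfect strategy (a perfect strategy for $\mathrm{IS}(G,t')$ restricts to one for $t\le t'$), this gives \textsf{RE}-hardness.

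For both directions we use the PVM characterisation of perfect synchronous strategies \cite{HMPS17}. A perfect strategy for $\mc{G}$ is a family of PVMs $\{E^q_a\}_a$ with $E^q_aE^{q'}_{a'}=0$ whenever $(q,q',a,a')$ loses. A perfect strategy for $\mathrm{IS}(G,t)$ is a family of PVMs $\{P^i_v\}_{v\in V}$, $i\in[t]$, with $P^i_vP^j_w=0$ for $i\ne j$ whenever $v=w$ or $v\sim w$.

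For the forward direction, identify $[t]$ with $Q$ and set $P^i_{(q,a)}=\delta_{i,q}E^q_a$. Each $\{P^i_v\}_v$ is a PVM. For $i\ne j$ the product $P^i_{(q,a)}P^j_{(q',a')}$ vanishes unless $q=i\neq j=q'$. In that case $(q,a)\neq(q',a')$ automatically, and if the two vertices are adjacent then the answers lose, so the product is $0$.

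The converse is the main obstacle, because a perfect $\mathrm{IS}$ strategy need not assign question $i$ to vertices of the form $(i,\cdot)$. The approach is to exploit that same-question vertices $(q,a),(q,a')$ with $a\ne a'$ are adjacent, since synchronicity makes that answer pair losing. Set $R_v=\sum_{i}P^i_v$. The conditions $P^i_vP^j_v=0$ for $i\neq j$ make $R_v$ a projection. Similarly $F^q:=\sum_a R_{(q,a)}$ is a projection: cross terms $P^i_{(q,a)}P^j_{(q,a')}$ with $i\ne j$ vanish by adjacency, and those with $i=j$ vanish because they are distinct elements of one PVM. One then has $\sum_q F^q=\sum_i\sum_v P^i_v=tI$ with $t=\abs{Q}$ projections, which forces every $F^q=I$.

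Hence $E^q_a:=R_{(q,a)}$ defines a PVM for each $q$. For a losing tuple $(q,q',a,a')$ with $(q,a)\ne(q',a')$, the product $E^q_aE^{q'}_{a'}=\sum_{i,j}P^i_{(q,a)}P^j_{(q',a')}$ vanishes. The terms with $i\neq j$ vanish by adjacency and those with $i=j$ by orthogonality within a PVM. This yields a perfect strategy for $\mc{G}$ and completes the equivalence. The delicate point to verify carefully is the step ``a sum of $t$ projections equal to $tI$ forces each to be $I$.'' It follows by taking traces, or by noting that $I-F^q\ge0$ for each $q$ while $\sum_q(I-F^q)=0$.
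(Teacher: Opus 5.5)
Your reduction is the paper's own: $\mathcal{G}\mapsto(G,|Q|)$, with vertex set $Q\times A$ and edges given by losing answer pairs. The paper only asserts that $\mathrm{IS}(G,|Q|)$ has a perfect quantum strategy if and only if $\mathcal{G}$ does; you actually prove it, and your argument is the right one. The forward direction via $P^i_{(q,a)}=\delta_{i,q}E^q_a$ is fine. In the converse, you show each $R_v$ and each $F^q$ is a projection, so $\sum_q F^q=|Q|\,I$ forces $F^q=I$. This is exactly what turns the $\mathrm{IS}$ strategy back into PVMs indexed by the questions of $\mathcal{G}$.

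One case is not covered, and the paper's construction has the same omission.

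\emph{Self-losing tuples.} A perfect strategy for $\mathcal{G}$ must annihilate every losing tuple, including ones of the form $(q,q,a,a)$. Synchronicity only forces $a\neq a'$ to lose on equal questions, so a synchronous game may also declare answering $a,a$ on $q,q$ losing, which forces $E^q_a=0$. A simple graph cannot record this. Your verification explicitly restricts to losing tuples with $(q,a)\neq(q',a')$, so nothing forces $R_{(q,a)}=0$. The equivalence then genuinely fails. Take $Q=\{q\}$, $A=\{a,b\}$ with every answer pair losing: $\mathcal{G}$ has no perfect strategy, yet $G$ is a single edge, $t=1$, and $\mathrm{IS}(G,1)$ is trivially won.

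The fix is to delete every vertex $(q,a)$ for which $(q,q,a,a)$ loses; your proof then runs verbatim. In the forward direction the deleted $E^q_a$ vanish anyway. In the converse, $\sum_q F^q=\sum_i\sum_v P^i_v=tI$ still holds over the surviving vertices, and you set $E^q_a=0$ on the deleted ones.

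\emph{Symmetry.} Unless $\mathcal{G}$ is symmetric, your edge relation need not be symmetric, so $G$ need not be undirected. Join $(q,a)$ and $(q',a')$ whenever either $(q,q',a,a')$ or $(q',q,a',a)$ loses. This is harmless, because $E^q_aE^{q'}_{a'}=0$ if and only if $E^{q'}_{a'}E^q_a=0$.
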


\section{Computing the quantum capacity is \textsf{QMA}-hard} \label{sec: QMA-hard}

In this section we show that computing the quantum capacity of a quantum channel is \textsf{QMA}-hard (\cref{cor:QMA-hard}). To do so, we introduce the following construction, a \textbf{symmetric direct sum} of quantum channels.

\begin{definition}[Symmetric direct sum] \label{def:symm-dir-sum}
    Let $U : \mc{H}_A \to \mc{H}_{A'}$ be an isometry, where $\mc{H}_{A'}=\C^{2^n}$ with the standard orthonormal basis $\{\ket{z}:z\in\{0,1\}^n\}$. Let $\Phi_0 : B(\mc{H}_B) \to B(\mc{H}_{B'})$ and $\Phi_1 : B(\mc{H}_B) \to B(\mc{H}_{B'})$ be CPTP maps. Define the \emph{symmetric direct sum} of $\Phi_0$ and $\Phi_1$ as the channel $\Phi_0 \oplus_U \Phi_1 : B(\mc{H}_A\otimes\mc{H}_B)\to B(\mc{H}_{A'}\otimes\mc{H}_{B'})$ given by
    \begin{equation}
        \Phi_0 \oplus_{U} \Phi_1 (\rho) \coloneqq \sum \limits_{z \in \{0, 1\}^n} (\ketbra{z} U \otimes I_{B'}) (\text{id}_A \otimes \Phi_{z_1}) (\rho) (U^{\dag} \ketbra{z} \otimes I_{B'}).
    \end{equation}
    Here, $z_1$ is the first bit of $z$.    
\end{definition}

The following lemma shows that the complement of the symmetric direct sum of $\Phi_0$ and $\Phi_1$ is equal to the symmetric direct sum of the complementary channels $\Phi_0^c$ and $\Phi_1^c.$

\begin{lemma} \label{lem:comp-direct-sum}
    $(\Phi_0 \oplus_U \Phi_1)^c = \Phi^c_0 \oplus_U \Phi^c_1$,
    where all symbols are as in \cref{def:symm-dir-sum}.
\end{lemma}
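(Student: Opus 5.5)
We prove the lemma by writing down an explicit Stinespring isometry for $\Phi_0\oplus_U\Phi_1$ and reading off its complement. Since complementary channels are only defined up to an isometry on the environment, the claim is understood in the natural sense: for a suitable dilation, the complement is exactly $\Phi^c_0\oplus_U\Phi^c_1$, with environment output $\mc{H}_{A'}\otimes\mc{H}_E$.

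First I fix the environments of $\Phi_0$ and $\Phi_1$ to be a common space $\mc{H}_E$. This can be arranged by padding the smaller one with an isometric embedding, which changes complements only up to isometry. So let $V_i:\mc{H}_B\to\mc{H}_{B'}\otimes\mc{H}_E$ be Stinespring isometries with $\Phi_i=\Tr_E(V_i\cdot V_i^\dagger)$ and $\Phi_i^c=\Tr_{B'}(V_i\cdot V_i^\dagger)$.

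Next I define the candidate isometry
\begin{equation*}
W:=\sum_{z\in\{0,1\}^n}\big(\ket{z}\otimes\ket{z}\big)\bra{z}U\otimes V_{z_1}:\ \mc{H}_A\otimes\mc{H}_B\to \mc{H}_{A'}\otimes\mc{H}_{A''}\otimes\mc{H}_{B'}\otimes\mc{H}_E .
\end{equation*}
Here $A''$ is a copy of $A'$: the first copy goes to the receiver and the second to the environment. This is the coherent version of "measure in the computational basis, then apply $V_{z_1}$". To check that $W$ is an isometry, compute $W^\dagger W=\sum_{z} (U^\dagger\ketbra{z}U)\otimes V_{z_1}^\dagger V_{z_1}$. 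The cross terms vanish because $\langle z|z'\rangle=0$ for $z\neq z'$ on the copied register. Since each $V_{z_1}$ is an isometry and $\sum_z\ketbra{z}=I$, this gives $U^\dagger U\otimes I=I$.

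Then I compute the two partial traces of $W\rho W^\dagger$, where
\begin{equation*}
W\rho W^\dagger=\sum_{z,z'}\ket{z}\!\bra{z'}\otimes\ket{z}\!\bra{z'}\otimes (\bra{z}U\otimes V_{z_1})\rho(U^\dagger\ket{z'}\otimes V_{z'_1}^\dagger).
\end{equation*}
\begin{enumerate}
\item Tracing out $A''E$ kills the off-diagonal terms $z\neq z'$ because of the copied register. It leaves $\sum_z \ketbra{z}\otimes \Tr_E\big[(\bra{z}U\otimes V_{z_1})\rho(\cdots)^\dagger\big]$, which is $\sum_z(\ketbra{z}U\otimes I)(\id_A\otimes\Phi_{z_1})(\rho)(U^\dagger\ketbra{z}\otimes I)$, i.e. $\Phi_0\oplus_U\Phi_1(\rho)$.
\item Tracing out $A'B'$ instead, the same argument applies with $E$ and $B'$ exchanged. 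It yields $\sum_z(\ketbra{z}U\otimes I)(\id_A\otimes\Phi^c_{z_1})(\rho)(U^\dagger\ketbra{z}\otimes I)=\Phi^c_0\oplus_U\Phi^c_1(\rho)$, now on $\mc{H}_{A''}\otimes\mc{H}_E$.
\end{enumerate}
The symmetry between the two copies of $\ket{z}$ is exactly what makes the construction self-complementary.

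The main obstacle is not the algebra but the bookkeeping. One must check carefully that the operator $\bra{z}U$, acting on $\mc{H}_A$ and tensored with $V_{z_1}$, reproduces the formula in \cref{def:symm-dir-sum}, where $\Phi_{z_1}$ is applied before $U$ is applied and the outcome projected. These agree because $U$ acts on $A$ and $\Phi_{z_1}$ on $B$, so they commute. Since Stinespring dilations are unique up to an isometry on the environment, the identity then holds for any choice of complement, up to that isometry.
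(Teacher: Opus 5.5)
Your proposal is correct and matches the paper's proof. The paper uses the same dilation $V=\sum_z \ket{zz}_{A'A'_E}\bra{z}U\otimes V_{z_1}$ and reads off both partial traces in the same way. You also verify that $W$ is an isometry and pad the two environments to a common space (which the definition of $\Phi_0^c\oplus_U\Phi_1^c$ in fact requires). The paper leaves both of these implicit, and you handle them correctly.
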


\begin{proof}
    Let $V_0 : \mc{H}_B \to \mc{H}_{B'E}$ and $V_1 : \mc{H}_B \to \mc{H}_{B'E}$ be the Stinespring isometries of $\Phi_0$ and $\Phi_1.$ Then, we claim that $V = \sum \limits_z \ket{zz}_{A'A'_E}\bra{z} U \otimes V_{z_1}$ is the Stinespring isometry of $\Phi_0 \oplus_U \Phi_1$ where $A'_E$ is a copy of register $A'$ in the environment of $\Phi_0 \oplus_U \Phi_1 (\rho)$.
    \begin{align*}
        \Tr_{A'_EE}(V\rho V^\dag)&= \Tr_{A'_EE}\left(\sum \limits_z \ket{zz}\bra{z} U \otimes V_{z_1}\right) \rho \left(\sum \limits_{z'} U^{\dag} \ket{z'}\bra{z'z'} \otimes V^{\dag}_{z'_1}\right) \\
        &= \sum \limits_z \sum \limits_{z'} \Tr_{A'_EE} \big(\ket{zz}\bra{z} U \otimes V_{z_1}\big) \rho \big(U^\dag\ket{z'}\bra{z'z'} \otimes V^{\dag}_{z'_1}\big) \\
        &= \sum \limits_z \sum \limits_{z'} \Tr_{E} \big(\ket{z}\bra{z} U \otimes V_{z_1}\big) \rho \big(U^\dag\ket{z'}\bra{z'} \otimes V^{\dag}_{z'_1}\big)\braket{z'}{z} \\
        &= \sum \limits_z \Tr_{E} \big(\ket{z}\bra{z} U \otimes V_{z_1}\big) \rho \left(U^\dag\ket{z}\bra{z} \otimes V^{\dag}_{z_1}\right) \\
        &= \sum_z (\ketbra{z} U \otimes I_{B'}) (\text{id}_A \otimes \Phi_{z_1}) (\rho) (U^{\dag} \ketbra{z} \otimes I_{B'}) \\
        &= \Phi_0 \oplus_U \Phi_1 (\rho).
    \end{align*}

    We have that $\Phi_0^c (\rho) = \Tr_{B'} (V_0 \rho V^{\dag}_0)$ and $\Phi_1^c (\rho) = \Tr_{B'} (V_1 \rho V^{\dag}_1)$. Therefore,
    \begin{align*}
        \Phi_0^c \oplus_U \Phi_1^c (\rho) &= \Tr_{B'} (V_0 \rho V^{\dag}_0) \oplus_U \Tr_{B'} (V_1 \rho V^{\dag}_1) \\
        &=\sum \limits_z (\ketbra{z} U \otimes I_E) (\text{id}_A \otimes \Phi_{z_1}^c) (\rho) (U^{\dag} \ketbra{z} \otimes I_E).
    \end{align*}

    We now consider the complement of the symmetric direct sum.
    \begin{align*}
        (\Phi_0 \oplus_U \Phi_1)^c (\rho) &= \Tr_{A'B'} \left(\sum \limits_z \ket{zz}\bra{z} U \otimes V_{z_1}\right) \rho \left(\sum \limits_{z'} U^{\dag} \ket{z'}\bra{z'z'} \otimes V^{\dag}_{z'_1}\right) \\
        &=\sum \limits_z \sum \limits_{z'} \Tr_{A'B'} \left(\ket{zz}\bra{z} U \otimes V_{z_1}\right) \rho \left(U^\dag\ket{z'}\bra{z'z'} \otimes V^{\dag}_{z'_1}\right) \\
        &=\sum \limits_z \Tr_{B'} (\ketbra{z} U \otimes V_{z_1}) \rho (U^{\dag} \ketbra{z} \otimes V_{z_1}) \\
        &= \Phi^c_0 \oplus_U \Phi^c_1 (\rho).
    \end{align*}
    This completes the proof.
\end{proof}

In the next lemma we show that the channel $\Phi_0 \oplus_U \Phi_1$ is degradable if $\Phi_0$ and $\Phi_1$ are degradable.

\begin{lemma} \label{lem:degr-dir-sum}
    Let $\Phi_0, \Phi_1$ be degradable channels and let $\Phi_0 \oplus_U \Phi_1$ be as constructed in \cref{def:symm-dir-sum}. Then, $\Phi_0 \oplus_U \Phi_1$ is a degradable channel.
\end{lemma}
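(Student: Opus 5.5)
We will exhibit an explicit degrading map, assembled from the degrading maps of $\Phi_0$ and $\Phi_1$ and itself a symmetric direct sum. Let $\mc{D}_0,\mc{D}_1 : B(\mc{H}_{B'})\to B(\mc{H}_E)$ be channels with $\mc{D}_i\circ\Phi_i=\Phi_i^c$. We may take $\Phi_0^c$ and $\Phi_1^c$ to have a common environment $E$, padding the smaller one with an isometry if necessary. By \cref{lem:comp-direct-sum}, the complement of $\Phi_0\oplus_U\Phi_1$ is $\Phi_0^c\oplus_U\Phi_1^c$, with output on $A'_E\otimes E\cong \mc{H}_{A'}\otimes\mc{H}_E$. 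So we need a channel $\mc{D}$ on $B(\mc{H}_{A'}\otimes\mc{H}_{B'})$ satisfying $\mc{D}\circ(\Phi_0\oplus_U\Phi_1)=\Phi_0^c\oplus_U\Phi_1^c$.

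The candidate is the classically controlled map
\[
\mc{D}(X)=\sum_{z\in\{0,1\}^n}(\ketbra{z}\otimes \mc{D}_{z_1})\big((\ketbra{z}\otimes I_{B'})X(\ketbra{z}\otimes I_{B'})\big).
\]
It first measures $A'$ in the computational basis, keeps the outcome $z$ (which becomes the classical copy $A'_E$ in the environment picture), and then applies $\mc{D}_{z_1}$ to the $B'$ register. This is a sum of CP maps, and its trace preservation follows because the projectors $\ketbra{z}$ sum to the identity and each $\mc{D}_i$ is trace preserving. Equivalently, $\mc{D}=\mc{D}_0\oplus_{I}\mc{D}_1$ in the notation of \cref{def:symm-dir-sum}, with $U=I_{A'}$, so it is a channel.

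The verification goes as follows. Write $\Psi=\Phi_0\oplus_U\Phi_1$. The output $\Psi(\rho)$ is already block diagonal in the $z$ basis on $A'$: it equals $\sum_z \ketbra{z}\otimes \tau_z$ with
\[
\tau_z=\Tr_{A'}\!\big[(\ketbra{z}U\otimes I)(\id_A\otimes\Phi_{z_1})(\rho)(U^\dagger\ketbra{z}\otimes I)\big].
\]
Hence the pinching in $\mc{D}$ acts trivially, and $\mc{D}(\Psi(\rho))=\sum_z\ketbra{z}\otimes \mc{D}_{z_1}(\tau_z)$. The key step is to commute $\mc{D}_{z_1}$ past the operations on the $A$ register. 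The map $X\mapsto (\bra{z}U\otimes I)X(U^\dagger\ket{z}\otimes I)$ acts only on $A$, while $\Phi_{z_1}$ and $\mc{D}_{z_1}$ act only on $B$. Therefore
\[
\mc{D}_{z_1}(\tau_z)=(\bra{z}U\otimes I)(\id_A\otimes \mc{D}_{z_1}\circ\Phi_{z_1})(\rho)(U^\dagger\ket{z}\otimes I)=(\bra{z}U\otimes I)(\id_A\otimes \Phi^c_{z_1})(\rho)(U^\dagger\ket{z}\otimes I).
\]
Summing over $z$ with $\ketbra{z}$ attached recovers exactly $\Phi_0^c\oplus_U\Phi_1^c(\rho)$, which by \cref{lem:comp-direct-sum} is $\Psi^c(\rho)$. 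We check this first on product inputs $\sigma\otimes\rho$ and extend by linearity.

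The main obstacle is bookkeeping rather than an idea. The complement in \cref{lem:comp-direct-sum} is defined up to an isometry on the environment, so we must be sure that the environment $A'_E\otimes E$ used there matches the output $\mc{H}_{A'}\otimes\mc{H}_E$ of $\mc{D}$, including the common environment for $\Phi_0^c$ and $\Phi_1^c$. Beyond that, we must justify that the $A$-side sandwich commutes with the $B$-side channels, which holds since they act on different tensor factors. As a consequence, degradability gives $Q(\Phi_0\oplus_U\Phi_1)=Q^{(1)}(\Phi_0\oplus_U\Phi_1)$ by \cite{CRS08}.
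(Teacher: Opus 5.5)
Your proposal is correct and follows essentially the same route as the paper. The paper uses the same classically controlled degrading map, which is indeed $\mathcal{D}_0\oplus_I\mathcal{D}_1$, and it verifies $\mathcal{D}\circ(\Phi_0\oplus_U\Phi_1)=\Phi_0^c\oplus_U\Phi_1^c$ by the same step of moving $\mathcal{D}_{z_1}$ past the $A$-side sandwich before invoking the complement lemma; your remarks on a common environment and on why $\mathcal{D}$ is a channel make explicit what the paper leaves implicit.
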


\begin{proof}
    Let $\mc{D}_0$ and $\mc{D}_1$ be degrading maps for $\Phi_0$ and $\Phi_1$ such that $\mc{D}_0 \circ \Phi_0 = \Phi_0^c$ and $\mc{D}_1 \circ \Phi_1 = \Phi^c_1.$ Define the channel $\mc{D}:B(\mc{H}_{A'}\otimes\mc{H}_{B'})\rightarrow B(\mc{H}_{A'_E}\otimes\mc{H}_{E})$ as
    \begin{align*}
        \mc{D}(\rho)=\sum_{z\in\{0,1\}^n}(\ketbra{z}\otimes I_E)(\mathrm{id}_{A'}\otimes\mc{D}_{z_1})(\rho)(\ketbra{z}\otimes I_E).
    \end{align*}
    We claim that $\mc{D}$ is a degrading map for $\Phi_0\oplus_U\Phi_1$. 
    
    Now we will show that $\mc{D} \circ (\Phi_0 \oplus_U \Phi_1)~=~(\Phi_0 \oplus_U \Phi_1)^c.$
    \begin{align*}
        &\mc{D} \circ (\Phi_0 \oplus_U \Phi_1) (\rho) \\
        =&\mc{D} \left(\sum \limits_{z \in \{0, 1\}^n} (\ketbra{z} U \otimes I_{B'}) (\text{id}_A \otimes \Phi_{z_1}) (\rho) (U^{\dag} \ketbra{z} \otimes I_{B'})\right) \\
        =&\sum_{z,z'\in\{0,1\}^n}(\ketbra{z}\otimes I_E)(\mathrm{id}_{A'}\otimes\mc{D}_{z_1}) \left( (\ketbra{z'} U \otimes I_{B'}) (\text{id}_A \otimes \Phi_{z'_1}) (\rho) (U^{\dag} \ketbra{z'} \otimes I_{B'}) \right) (\ketbra{z}\otimes I_E) \\
        =&\sum_{z\in\{0,1\}^n}(\ketbra{z} U \otimes I_E )(\text{id}_A \otimes \mc{D}_{z_1} \circ \Phi_{z_1}) (\rho) (U^{\dag} \ketbra{z}\otimes I_E) \\
        =&\sum_{z\in\{0,1\}^n}(\ketbra{z} U \otimes I_E )(\text{id}_A \otimes \Phi_{z_1}^c) (\rho) (U^{\dag} \ketbra{z}\otimes I_E) \\
        =&(\Phi_0 \oplus_U \Phi_1)^c (\rho),
    \end{align*}
    where the last equality follows from \cref{lem:comp-direct-sum}. This completes the proof.
\end{proof}

An immediate corollary results from the above lemma just by the property that if a channel is degradable, then its quantum capacity is equal to its one-shot quantum capacity \cite{CRS08}.

\begin{corollary} \label{cor:Q=Q1-symm-dir-sum}
    Let $\Phi_0, \Phi_1$ be degradable channels and let $\Phi_0 \oplus_U \Phi_1$ be as constructed in \cref{def:symm-dir-sum}. Then, $Q(\Phi_0 \oplus_U \Phi_1) = Q^{(1)}(\Phi_0 \oplus_U \Phi_1)$.
\end{corollary}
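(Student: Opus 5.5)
The corollary follows by combining \cref{lem:degr-dir-sum} with the additivity of coherent information for degradable channels \cite{CRS08}. First, since $\Phi_0$ and $\Phi_1$ are assumed degradable, \cref{lem:degr-dir-sum} directly gives that $\Psi := \Phi_0 \oplus_U \Phi_1$ is degradable. Its degrading map is the controlled map $\mc{D}$ built there, which reads the classical register $A'$ and applies $\mc{D}_{z_1}$ to $B'$. \cref{lem:comp-direct-sum} ensures that the complement $\Psi^c$ used in the definition of degradability is exactly $\Phi_0^c\oplus_U\Phi_1^c$.

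Second, I invoke the known result of \cite{CRS08}: for any degradable channel $\mc{N}$, the one-shot coherent information is additive, $Q^{(1)}(\mc{N}^{\otimes n}) = nQ^{(1)}(\mc{N})$. If a self-contained justification is wanted, I would sketch it in two steps.

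\begin{enumerate}
\item Tensor powers of degradable channels are degradable: $\mc{D}^{\otimes n}\circ\mc{N}^{\otimes n} = (\mc{N}^c)^{\otimes n} = (\mc{N}^{\otimes n})^c$, where the last equality holds up to an isometry on the environment, which does not change entropies.
\item For a degradable channel, coherent information equals a conditional entropy $H(E'|F)$ of the degrading map's output, evaluated on the state $\mc{D}\circ\mc{N}(\rho)$ with $F$ the degrading map's environment. Strong subadditivity then yields $I_c(\mc{N}_1\otimes\mc{N}_2,\rho_{12}) \le I_c(\mc{N}_1,\rho_1)+I_c(\mc{N}_2,\rho_2)$. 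The reverse inequality follows by taking product inputs.
\end{enumerate}

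Plugging this additivity into \cref{def: qcap} gives $Q(\Psi)=\lim_n \frac{1}{n}\,nQ^{(1)}(\Psi)=Q^{(1)}(\Psi)$.

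The main obstacle is only bookkeeping: making sure the environment of $\Psi$ used in \cref{lem:degr-dir-sum}, namely $A'_E E$, is a legitimate Stinespring environment. Any two complementary channels differ by an isometry on the environment, so entropies, and hence $I_c$, are unaffected. Beyond that, the result is an immediate citation.
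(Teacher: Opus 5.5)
Your proposal is correct and matches the paper's argument: \cref{lem:degr-dir-sum} makes $\Phi_0\oplus_U\Phi_1$ degradable, and the result of \cite{CRS08} then gives $Q=Q^{(1)}$ for degradable channels. One slip in your optional self-contained sketch: for a degradable channel the coherent information is $S(E'F)-S(E')=H(F|E')$, not $H(E'|F)$; once the conditioning is corrected, the strong-subadditivity argument for additivity goes through as you describe.
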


We now introduce the specific channels we choose as $\Phi_0$ and $\Phi_1$ to establish our \textsf{QMA}-hardness result, \cref{thm:main-QMA-hardness}.

\begin{definition} \label{def:Q_}
    Let $\Phi$ be a quantum channel. We define $Q^{(1)}_-(\Phi) \coloneqq \min\limits_\rho I_c(\Phi,\rho)$.
\end{definition}

\begin{remark} \label{rem:amp-damp}
    Consider the two extreme cases of the \textit{qubit amplitude damping channel}. This channel denoted by $\mc{A}_{\eta}$, is specified by two Kraus operators $A_0 = \begin{pmatrix} 
        1 & 0 \\
        0 & \sqrt{1 - \eta}
      \end{pmatrix}$ and $A_1
    =  \begin{pmatrix}
          0 & \sqrt{\eta} \\
          0 & 0
      \end{pmatrix}$ such that $A_{\eta}(\rho) = A_0 \rho A_0^{\dag} + A_1 \rho A_1^{\dag}.$ Here, $\eta$ is interpreted as the probability of transmission. This channel physically represents the loss of an excitation (say, a photon) from a quantum system to a noisy environment. This channel is degradable for $\eta \in \left[0, \frac{1}{2} \right]$ and anti-degradable for $\eta \in \left[\frac{1}{2}, 1\right].$ Notably, it is both degradable and anti-degradable at $\eta = 1/2.$ At $\eta =0$, $\mc{A}_0 = \text{id}_2$ and hence $Q(\mc{A}_0) = Q^{(1)}(\mc{A}_0) = \log 2 = 1$. Since the quantum capacity is zero for any anti-degradable channel we have that $Q(\mc{A}_{1/2}) = Q^{(1)}(\mc{A}_{1/2}) = 0$. Further, since $I_c(\mc{A}_{1/2},\rho)=0$ for all $\rho$, $Q^{(1)}_-(\mc{A}_{1/2})=0$, as defined in \cref{def:Q_}, as well. We choose $\Phi_1 = \mc{A}_0$ and $\Phi_0 = \mc{A}_{1/2}$.
\end{remark}

We now state the following known result for the von-Neumann entropy of subnormalised states. It is important for establishing a result about the coherent information of the symmetric direct sum of channels, \cref{thm:coh-inf-symm-dir-sum}, hence we include a proof.

\begin{lemma} \label{lem: subnorm-vN-ent}
    For every $q \geq 0$ and $\omega \in D(\mc{H})$, $S(q \omega) = q S(\omega) - q \log q$.    
\end{lemma}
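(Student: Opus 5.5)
The plan is a direct spectral computation from the definition $S(\sigma) = -\Tr(\sigma\log\sigma)$, extended to positive (possibly subnormalised) operators via functional calculus, with the convention $0\log 0 = 0$.

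First dispose of the trivial case $q=0$. Here $q\omega = 0$, so $S(q\omega)=0$. The right-hand side is $0\cdot S(\omega) - 0\log 0 = 0$ under the same convention, so the identity holds.

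For $q>0$, diagonalise $\omega = \sum_i \lambda_i \ketbra{e_i}$ with $\lambda_i\geq 0$ and $\sum_i \lambda_i = \Tr\omega = 1$. Then $q\omega = \sum_i q\lambda_i \ketbra{e_i}$, so
\begin{equation*}
S(q\omega) = -\sum_i q\lambda_i \log(q\lambda_i).
\end{equation*}
Only indices with $\lambda_i>0$ contribute. For those indices split $\log(q\lambda_i) = \log q + \log \lambda_i$, which gives
\begin{equation*}
S(q\omega) = -q\sum_i \lambda_i\log\lambda_i \;-\; q\log q\sum_i \lambda_i .
\end{equation*}
The first term is $qS(\omega)$. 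Using $\sum_i\lambda_i = 1$, the second term is $-q\log q$.

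There is no real obstacle here. The only point needing care is the treatment of zero eigenvalues, since $\log(q\lambda_i)$ is undefined when $\lambda_i=0$. Restricting the sums to the support of $\omega$ handles this, and it is consistent with the convention $0\log 0 = 0$ built into the definition of $S$.
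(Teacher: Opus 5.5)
Your proof is correct and takes essentially the same approach as the paper, which does the same calculation at the operator level by splitting $\log(q\omega)=\log\omega+(\log q)I$ and using $\Tr\omega=1$. Your eigenvalue version is slightly more careful than the paper's, since it explicitly handles $q=0$ and the zero eigenvalues of $\omega$, where that operator identity only holds on the support.
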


\begin{proof}
Since $\Tr(\omega) = 1$, we obtain that 
    \begin{align*}
        S(q \omega) &= - \Tr \big(q \omega \log (q \omega)\big) \\
        &= - \Tr (q \omega \log \omega) - \Tr (q \omega \log q) \\
        &= - q \Tr(\omega \log \omega) -q \log q \Tr(\omega) \\
        &= q S(\omega) - q \log q.
    \end{align*}
This completes the proof.
\end{proof}

The next lemma gives us the von-Neumann entropy of classical-quantum states, another known result important for establishing \cref{thm:coh-inf-symm-dir-sum}.

\begin{lemma} [\cite{YHD08}] \label{lem:vN-ent-cq}
    Let $\rho = \sum \limits_z \ketbra{z} \otimes \rho_z$ be a classical-quantum state. Then
    \begin{equation}
        S(\rho) = \sum \limits_z S(\rho_z).
    \end{equation}
\end{lemma}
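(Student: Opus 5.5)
The plan is to use the block-diagonal structure of $\rho$ directly, with the convention that the $\rho_z$ are positive but possibly subnormalised: $\rho_z\ge 0$ and $\sum_z \Tr(\rho_z)=1$. The entropy $S(\cdot)=-\Tr(\cdot\log\cdot)$ is then applied blockwise to the subnormalised $\rho_z$ (with $0\log 0=0$), as in \cref{lem: subnorm-vN-ent}.

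\textbf{Step 1: a joint eigenbasis.} For each $z$, spectrally decompose $\rho_z=\sum_k \lambda_{z,k}\ketbra{e_{z,k}}$, with $\lambda_{z,k}\ge 0$. Then
\begin{equation*}
\rho=\sum_{z,k}\lambda_{z,k}\,\ketbra{z}\otimes\ketbra{e_{z,k}}.
\end{equation*}
The vectors $\ket{z}\otimes\ket{e_{z,k}}$ form an orthonormal basis of $\mc{H}_{A'}\otimes\mc{H}$, because the $\ket{z}$ are orthonormal and, for fixed $z$, the $\ket{e_{z,k}}$ are orthonormal. This is therefore a spectral decomposition of $\rho$, and the spectrum of $\rho$ is the multiset union of the spectra of the $\rho_z$.

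\textbf{Step 2: functional calculus blockwise.} Write $\eta(x)=-x\log x$. Then $S(\rho)=\sum_{z,k}\eta(\lambda_{z,k})$. Grouping the terms by $z$ gives $\sum_z\sum_k \eta(\lambda_{z,k})=\sum_z S(\rho_z)$. Equivalently, $f(\rho)=\sum_z\ketbra{z}\otimes f(\rho_z)$ for any function $f$ applied through the functional calculus, so $\Tr(\rho\log\rho)=\sum_z\Tr(\rho_z\log\rho_z)$.

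\textbf{Step 3: the form used downstream.} Writing $\rho_z=p_z\omega_z$ with $\omega_z\in D(\mc{H})$ and $p_z=\Tr\rho_z$, \cref{lem: subnorm-vN-ent} turns the identity into
\begin{equation*}
S(\rho)=H(p)+\sum_z p_z S(\omega_z),
\end{equation*}
where $H(p)$ is the Shannon entropy of $p$. This is the expression I expect to be convenient in \cref{thm:coh-inf-symm-dir-sum}.

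The only delicate point is bookkeeping rather than a genuine obstacle. One has to interpret $S$ on subnormalised operators and zero eigenvalues consistently via $0\log 0=0$. One also has to justify that the blockwise spectral decomposition really is a spectral decomposition of $\rho$; this follows from the orthogonality of the classical register.
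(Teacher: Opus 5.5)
Your proof is correct. The paper gives no proof of its own and simply cites \cite{YHD08}, and your blockwise spectral argument is the standard one behind that citation: the spectrum of $\rho$ is the union of the spectra of the blocks, with the convention $0\log 0=0$. Your reading of the $\rho_z$ as subnormalised, with $\sum_z\Tr\rho_z=1$, also matches how the lemma is used in the proof of \cref{thm:coh-inf-symm-dir-sum}, where the blocks are $p_z\Phi_{z_1}(\rho_z)$.
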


The following theorem gives the coherent information of the symmetric direct sum of quantum channels.

\begin{theorem} \label{thm:coh-inf-symm-dir-sum}
    Let $\Phi_0 \oplus_U \Phi_1 (\rho)$ be as in \cref{def:symm-dir-sum}. Let $Q^{(1)}(\Phi_1) \geq Q^{(1)}(\Phi_0).$ Let $p = \max \limits_{\sigma} \Tr\left( (U^{\dag} (\ketbra{1} \otimes I) U) \sigma \right)$ where the maximisation is over quantum states $\sigma$. Then, writing $Q^{(1)}_-(\Phi_0)$ as in \cref{def:Q_}, we have that
    \begin{equation}
        p Q^{(1)}(\Phi_1) + (1-p)Q^{(1)}(\Phi_0)\geq Q^{(1)} (\Phi_0 \oplus_U \Phi_1)\geq p Q^{(1)}(\Phi_1) + (1-p)Q^{(1)}_-(\Phi_0). \label{eq:capacity-symm-dir-sum}
    \end{equation}
\end{theorem}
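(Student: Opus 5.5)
The plan is to compute $I_c(\Phi_0\oplus_U\Phi_1,\rho)$ explicitly for an arbitrary input $\rho\in D(\mc{H}_A\otimes\mc{H}_B)$ and reduce it to a convex combination of coherent informations of $\Phi_0$ and $\Phi_1$. Writing $\Pi_z=\ketbra{z}U$, the output of $\Phi_0\oplus_U\Phi_1$ is the classical-quantum operator $\sum_z \ketbra{z}\otimes \Phi_{z_1}(\tau_z)$, where $\tau_z:=\Tr_{A'}\big((\ketbra{z}U\otimes I)\rho(U^\dag\ketbra{z}\otimes I)\big)$ is a subnormalised state on $B$. By \cref{lem:comp-direct-sum}, the complementary output is $\sum_z\ketbra{z}\otimes\Phi^c_{z_1}(\tau_z)$, with the same $\tau_z$. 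Set $q_z=\Tr\tau_z$ and $\omega_z=\tau_z/q_z$ (for $q_z>0$).

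Apply \cref{lem:vN-ent-cq} to both outputs and then \cref{lem: subnorm-vN-ent} to each block. The $-q_z\log q_z$ terms are identical for the output and the environment, because $\Phi_{z_1}$ and $\Phi^c_{z_1}$ are trace preserving, so they cancel. This gives
\[
I_c(\Phi_0\oplus_U\Phi_1,\rho)=\sum_z q_z\, I_c(\Phi_{z_1},\omega_z).
\]
Group the $z$ by their first bit and let $P:=\sum_{z_1=1}q_z=\Tr\big((U^\dag(\ketbra{1}\otimes I)U\otimes I_B)\rho\big)$. For the upper bound, bound $I_c(\Phi_{z_1},\omega_z)\le Q^{(1)}(\Phi_{z_1})$, which yields $I_c\le P\,Q^{(1)}(\Phi_1)+(1-P)Q^{(1)}(\Phi_0)$. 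Since $P\le p$ (take $\sigma=\rho_A$) and $Q^{(1)}(\Phi_1)\ge Q^{(1)}(\Phi_0)$, this expression is nondecreasing in $P$, so it is at most $pQ^{(1)}(\Phi_1)+(1-p)Q^{(1)}(\Phi_0)$. Maximising over $\rho$ gives the left inequality.

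For the lower bound, choose a product input $\rho=\sigma^\ast\otimes\omega^\ast$. Here $\sigma^\ast$ attains $p$ and $\omega^\ast$ attains $Q^{(1)}(\Phi_1)$, with the maxima attained by compactness and continuity. Then $\tau_z=\bra{z}U\sigma^\ast U^\dag\ket{z}\,\omega^\ast$, so $\omega_z=\omega^\ast$ for all $z$ and $P=p$. The formula above then gives $pI_c(\Phi_1,\omega^\ast)+(1-p)I_c(\Phi_0,\omega^\ast)\ge pQ^{(1)}(\Phi_1)+(1-p)Q^{(1)}_-(\Phi_0)$. Because $Q^{(1)}$ is a maximum over inputs, this yields the right inequality.

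The main obstacle is the bookkeeping in the first step: checking that the output really is block diagonal in $z$ with blocks $\Phi_{z_1}(\tau_z)$. This requires commuting $\ketbra{z}U$ on $A$ past $\Phi_{z_1}$ on $B$ and tracing out $A'$ within each block. The blocks of the complementary output must also be identified with the same $\tau_z$. In addition, \cref{lem:vN-ent-cq} has to be used in the form $S(\sum_z\ketbra{z}\otimes\rho_z)=\sum_z S(\rho_z)$ for subnormalised $\rho_z$, with the conventions $0\log 0=0$ and $q_z=0$ terms dropped. Once this block structure is established, everything else is linear bookkeeping.
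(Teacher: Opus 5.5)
Your proposal is correct and follows the paper's proof essentially step for step. You use the classical-quantum block form of the output and, via \cref{lem:comp-direct-sum}, of the complementary output; the $-q_z\log q_z$ terms cancel to give $I_c=\sum_z q_z I_c(\Phi_{z_1},\omega_z)$; the termwise bound gives the upper inequality; and the product input $\sigma^\ast\otimes\omega^\ast$ gives the lower one. Your explicit monotonicity argument ($P\le p$ together with $Q^{(1)}(\Phi_1)\ge Q^{(1)}(\Phi_0)$) spells out the step the paper compresses into ``the last equality follows from the assumption.''
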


\begin{proof}
    Define $p_z=\Tr(\bra{z}U\otimes I_{B})\rho(U^\dag\ket{z}\otimes I_{B})$ and $\rho_z=\frac{1}{p_z}(\bra{z}U\otimes I_{B})\rho(U^\dag\ket{z}\otimes I_{B})$. Then, $(\Phi_0\oplus_U\Phi_1)(\rho)=\sum \limits_z p_z\ketbra{z}\otimes\Phi_{z_1}(\rho_z)$. 
    Now \begin{align}
        S(\Phi_0\oplus_U\Phi_1 (\rho)) &= S\left(\sum \limits_z p_z \ketbra{z} \otimes \Phi_{z_1} (\rho_z)\right) \nonumber \\
        &= \sum \limits_z S \big(p_z \Phi_{z_1} (\rho_z)\big) \nonumber \\
        &= \sum \limits_z \Big(p_z S(\Phi_{z_1}(\rho_z)) - p_z \log p_z \Big), \label{eq:entropy-coh}
    \end{align} 
    where the second equality is by \cref{lem:vN-ent-cq} and the last equality is by \cref{lem: subnorm-vN-ent}.
    Then,
    \begin{align*}
        S((\Phi_0\oplus_U\Phi_1)^c (\rho)) &= S((\Phi^c_0 \oplus_U \Phi^c_1) (\rho)) \\
        &= \sum \limits_z \Big(p_z S(\Phi^c_{z_1}(\rho_z)) - p_z \log p_z \Big),
    \end{align*}
    where the first equality is by \cref{lem:comp-direct-sum} and the last equality is by \cref{eq:entropy-coh}.
    Now,
    \begin{align*}
        I_c (\Phi_0 \oplus_U \Phi_1, \rho) &= S(\Phi_0 \oplus_U \Phi_1 (\rho)) - S(\Phi_0 \oplus_U \Phi_1)^c (\rho) \\
        &= \sum \limits_z \big(p_z S(\Phi_{z_1}(\rho_z)) - p_z \log p_z \big) - \sum \limits_z \big(p_z S(\Phi^c_{z_1}(\rho_z)) - p_z \log p_z \big) \\
        &= \sum \limits_z p_z I_c(\Phi_{z_1}, \rho_z). 
    \end{align*}
 Then,
    \begin{align*}
        Q^{(1)}(\Phi_0 \oplus_U \Phi_1) &= \max \limits_{\rho}\sum \limits_z p_z I_c(\Phi_{z_1}, \rho_z) \\
        &\leq \max \limits_{\rho} \sum \limits_{z} p_z Q^{(1)}(\Phi_{z_1}) \\
        &= p Q^{(1)}(\Phi_1) + (1-p)Q^{(1)}(\Phi_0).
    \end{align*}
   The last equality follows from the assumption that $Q^{(1)}(\Phi_1)\geq Q^{(0)}(\Phi_0)$. For the lower bound, let $\sigma$ be a state such that $p=\Tr\left( (U^{\dag} (\ketbra{1} \otimes I) U) \sigma \right)$ and $\rho$ be a state such that $I_c(\Phi_1,\rho)=Q^{(1)}(\Phi_1)$. Then,
    \begin{align*}
        Q^{(1)}(\Phi_0 \oplus_U \Phi_1) &\geq I_c(\Phi_0 \oplus_U \Phi_1,\sigma\otimes\rho)\\
        &= \sum \limits_z p_z I_c(\Phi_{z_1}, \rho) \\
        &= p I_c(\Phi_1,\rho) + (1-p)I_c(\Phi_0,\rho)\\
        &\geq p Q^{(1)}(\Phi_1) + (1-p)Q^{(1)}_-(\Phi_0).
    \end{align*}
\end{proof}

The following corollary immediately holds.

\begin{corollary} \label{cor:special-channels-dir-sum}
    Let $\Phi_0 = \mc{A}_{1/2}$ and $\Phi_1 = \mc{A}_0$ as in \cref{rem:amp-damp}. Let $p$ be as in \cref{thm:coh-inf-symm-dir-sum}. Then, $Q^{(1)}(\Phi_0\oplus_U\Phi_1)=p$.
\end{corollary}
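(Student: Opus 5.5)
The plan is to apply \cref{thm:coh-inf-symm-dir-sum} directly and show that its upper and lower bounds coincide for the choice $\Phi_0=\mc{A}_{1/2}$, $\Phi_1=\mc{A}_0$. Three quantities must be pinned down: $Q^{(1)}(\Phi_1)$, $Q^{(1)}(\Phi_0)$ and $Q^{(1)}_-(\Phi_0)$.

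First, I will check the hypothesis $Q^{(1)}(\Phi_1)\geq Q^{(1)}(\Phi_0)$ and read off the values. By \cref{rem:amp-damp}, $\mc{A}_0=\text{id}_2$. Since its complementary channel is the trace, $I_c(\text{id}_2,\rho)=S(\rho)$, and so $Q^{(1)}(\Phi_1)=1$. For $\Phi_0=\mc{A}_{1/2}$, the same remark gives $Q^{(1)}(\Phi_0)=0$ and $Q^{(1)}_-(\Phi_0)=0$. Hence the hypothesis $1\geq 0$ holds.

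The key point behind $Q^{(1)}_-(\mc{A}_{1/2})=0$ is that $\mc{A}_{1/2}$ is both degradable and anti-degradable. So I would briefly justify $I_c(\mc{A}_{1/2},\rho)=0$ for every $\rho$ rather than just the maximum. One route is to note that the complementary channel of $\mc{A}_{1/2}$ is unitarily equivalent to $\mc{A}_{1/2}$ itself: it is an amplitude damping channel with parameter $1-\eta=1/2$, up to a relabelling of the environment basis. It follows that $S(\mc{A}_{1/2}(\rho))=S(\mc{A}_{1/2}^c(\rho))$ for all $\rho$, so the coherent information vanishes identically. Alternatively, degradability and anti-degradability together give $I_c\geq 0$ and $I_c\leq 0$ for all inputs via data processing. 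Either way, this step is essentially already recorded in \cref{rem:amp-damp}, and I would simply invoke it.

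Substituting these values into \eqref{eq:capacity-symm-dir-sum}, the upper bound becomes $p\cdot 1+(1-p)\cdot 0=p$, and the lower bound becomes $p\cdot 1+(1-p)\cdot 0=p$. The two bounds are equal, so $Q^{(1)}(\Phi_0\oplus_U\Phi_1)=p$.

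The only potentially delicate step is the pointwise vanishing of $I_c(\mc{A}_{1/2},\cdot)$. Everything else is substitution into \cref{thm:coh-inf-symm-dir-sum}.
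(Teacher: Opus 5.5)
Your proposal is correct and is exactly the paper's argument: the paper presents the corollary as immediate from \cref{thm:coh-inf-symm-dir-sum}, since substituting $Q^{(1)}(\mc{A}_0)=1$ and $Q^{(1)}(\mc{A}_{1/2})=Q^{(1)}_-(\mc{A}_{1/2})=0$ from \cref{rem:amp-damp} makes both bounds equal to $p$. Your extra justification that $I_c(\mc{A}_{1/2},\cdot)\equiv 0$ is also sound: in the paper's Kraus convention one has $\mc{A}_\eta^c=\mc{A}_{1-\eta}$ exactly, and the data-processing route works when you apply it to the identity $I(R;B)-I(R;E)=2I_c$ for a purifying reference $R$, rather than to the output entropies themselves.
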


\begin{figure}
    \centering
    \includegraphics[width=1\textwidth]{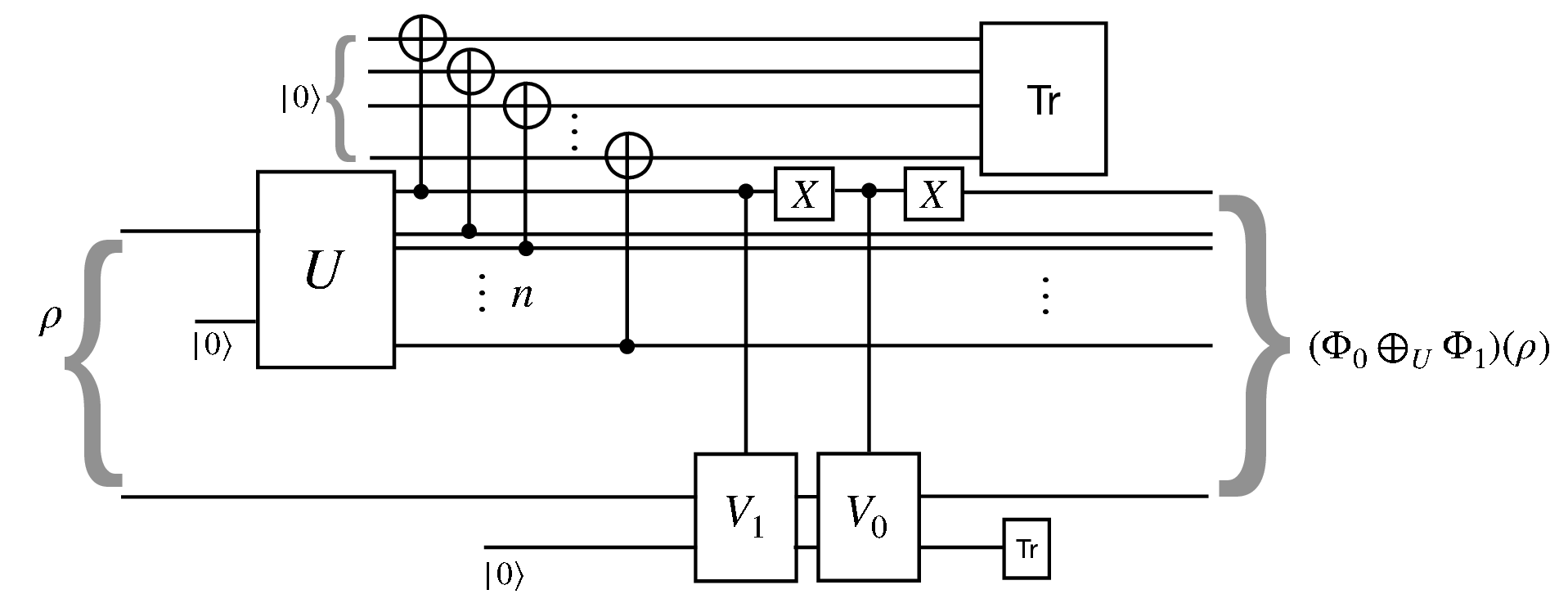}
    \caption{The symmetric direct sum of channels succinctly presented by a circuit.}
    \label{fig:circuit}
\end{figure}

In the remark below, we show that there exists a circuit implementing the symmetric direct sum of channels as illustrated in \cref{fig:circuit}.

\begin{remark} \label{rem:cir-dir-sum}
    Let $C_0$, $C_1$ be circuits representing the channel actions of $\Phi_0$ and $\Phi_1.$ From the Stinespring representation of $\Phi_0(\rho) = (\text{id} \otimes \Tr)(V_0 (\rho \otimes \ketbra{0}) V^{\dag}_0), ~\Phi_1(\rho) = (\text{id} \otimes \Tr)(V_1 (\rho \otimes \ketbra{0}) V^{\dag}_1)$ we can, without loss of generality, let $C_1$ and $C_0$ be represented by the action of the unitary $V_1$ and $V_0$ respectively, where $\ketbra{0}$ denotes an auxiliary input system. Let $C$ be the circuit corresponding to the unitary $U$. We now construct the circuit for the channel $\Phi_0 \oplus_U \Phi_1$ using the diagram in \cref{fig:circuit}. 
\end{remark}

The following theorem establishes a reduction from every promise problem in \textsf{QMA} to the problem of deciding the quantum capacity of symmetric direct sum channels.

\begin{theorem} \label{thm:main-QMA-hardness}
    For every promise problem $L \in \textsf{QMA}_{\frac{3}{4}, \frac{1}{4}}$, there is an efficient mapping from instances $x$ to circuit descriptions implementing channels $\Phi_x$ such that if $x\in Y$ then $Q(\Phi_x)\geq \frac{3}{4}$, and if $x\in N$ then $Q(\Phi_x) \leq \frac{1}{4}$.
\end{theorem}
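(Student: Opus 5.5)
Fix $L=(Y,N)\in\textsf{QMA}_{\frac34,\frac14}$ with verifier $V$ as in \cref{def: class-qma}. On input $x$, we first run $V$ to obtain, in time $r(|x|)$, the circuit for the unitary $U_x$ on $p(|x|)+l(|x|)$ qubits. Following \cref{rem:QMA-Isometry}, we take the isometry
\[
W_x:(\C^2)^{\otimes p(|x|)}\to(\C^2)^{\otimes (p(|x|)+l(|x|))},\qquad \ket{\psi}\mapsto U_x(\ket{\psi}\otimes\ket{0}^{\otimes l(|x|)}).
\]
We then define
\[
\Phi_x:=\mc{A}_{1/2}\oplus_{W_x}\mc{A}_0,
\]
which is the symmetric direct sum of \cref{def:symm-dir-sum} with $n=p(|x|)+l(|x|)$, $\Phi_0=\mc{A}_{1/2}$ and $\Phi_1=\mc{A}_0=\mathrm{id}_2$.

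\textbf{Efficiency of the mapping.} By \cref{rem:cir-dir-sum} and \cref{fig:circuit}, a circuit for $\Phi_x$ is assembled in polynomial time from three pieces.
\begin{itemize}
\item $W_x$ is implemented by preparing $l(|x|)$ ancillas in $\ket{0}$ and applying the circuit of $U_x$.
\item A computational-basis measurement is implemented coherently by CNOT-copying each qubit into fresh ancillas and tracing those ancillas out.
\item $\Phi_{z_1}$ is applied to the second input register, controlled on the first output qubit.
\end{itemize}
For the controlled step, $\mc{A}_{1/2}$ has a fixed constant-size Stinespring unitary $V_{1/2}$ on two qubits. Applying $V_{1/2}$ controlled on $z_1=0$ and then tracing out its environment qubit gives the required branch. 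Since $V_{1/2}$ involves irrational rotations, it is exact only up to gate-set approximation; I would note that it can be compiled to any fixed precision with constant overhead, or simply adopt the paper's convention that $V_{1/2}$ is an allowed gate.

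\textbf{Correctness.} Both $\mc{A}_0$ and $\mc{A}_{1/2}$ are degradable by \cref{rem:amp-damp}, so $\Phi_x$ is degradable by \cref{lem:degr-dir-sum}. \Cref{cor:Q=Q1-symm-dir-sum} then gives $Q(\Phi_x)=Q^{(1)}(\Phi_x)$. By \cref{cor:special-channels-dir-sum},
\[
Q^{(1)}(\Phi_x)=p_x:=\max_\sigma \Tr\big(W_x^\dagger(\ketbra{1}\otimes I)W_x\sigma\big).
\]
This follows from \cref{thm:coh-inf-symm-dir-sum}, because $Q^{(1)}(\mc{A}_0)=1\ge 0=Q^{(1)}(\mc{A}_{1/2})=Q^{(1)}_-(\mc{A}_{1/2})$, so the two bounds there coincide at $p_x$. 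By \cref{rem:QMA-Isometry}, $p_x$ is exactly the maximum acceptance probability of the verifier on $x$. Hence $x\in Y$ gives $Q(\Phi_x)\ge\frac34$, and $x\in N$ gives $Q(\Phi_x)\le\frac14$.

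\textbf{Main obstacle.} The main subtlety is making sure the circuit actually implements $\Phi_x$ as defined, with the decision qubit in the position of $z_1$. The measurement must be realised by coherent copying rather than by discarding, so that the Stinespring structure of \cref{lem:comp-direct-sum} (the copy register $A'_E$) matches the circuit. A secondary point is exact versus approximate implementation of $V_{1/2}$. If only approximation is available, I would use continuity of coherent information (Alicki–Fannes) together with the slack between $\frac34,\frac14$ and amplified thresholds $c,s$ (via \cite{MW05}) to absorb the error. This requires checking that degradability-based equality $Q=Q^{(1)}$ still yields the needed bounds for the approximating channel, which is the delicate part.
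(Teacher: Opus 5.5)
Your proposal is correct and is essentially the paper's proof: the same map $x\mapsto \mc{A}_{1/2}\oplus_{W_x}\mc{A}_0$, then $Q=Q^{(1)}$ by \cref{lem:degr-dir-sum} and \cref{cor:Q=Q1-symm-dir-sum}, and $Q^{(1)}(\Phi_x)=p_x$, the maximal acceptance probability, by \cref{cor:special-channels-dir-sum}. You are in fact more explicit than the paper about the degradability step.

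The approximation issue you call delicate does not arise. The Stinespring unitary of $\mc{A}_{1/2}$ can be chosen with entries in $\{0,\pm 1,\pm\frac{1}{\sqrt{2}}\}$, for instance a controlled-Hadamard conjugated by CNOTs and followed by a $Z$. It is therefore exactly implementable over Clifford$+T$, so no continuity argument is needed.
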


\begin{proof}
    Let $L=(Y,N) \in \textsf{QMA}_{\frac{3}{4},\frac{1}{4}}$ as in \cref{def: class-qma},
    and let $V(x)$ be the corresponding verification circuit for each instance $x$.  The circuit $V(x)$ acts by a unitary $U_x:\C^{2^{n_1}}\otimes \C^{2^{n_2}} \to \C^{2^{n_1+n_2}}$ on the proof state in $ \C^{2^{n_1}}$ with an ancilla $\ket{0}^{\otimes n_2} \in \C^{2^{n_2}}$, and then measures the first qubit in the computational basis. Here $n_1=p(\abs{x})$ and $n_2=l(\abs{x})$ as defined in \Cref{def: class-qma}. Define the isometry $W_x:\C^{2^{n_1}}\to \C^{2^{n_1+n_2}}$ by sending $\ket{\psi}\mapsto U_x(\ket{\psi}\otimes\ket{0}^{\otimes n_2})$, as discussed in \cref{rem:QMA-Isometry}. Then the probability of accepting $\rho$ is given by $\Tr(W_x^\dagger (\ketbra{1}{1}\otimes I)W_x\rho)$.
    
    Let $\Phi_x = \Phi_0\oplus_{W_x}\Phi_1$ be the symmetric direct sum where $\Phi_0 = \mc{A}_{1/2}$ and $\Phi_1 = \mc{A}_0$ are the extremes of the qubit amplitude damping channel $\mc{A}_{\eta}$ as in \cref{rem:amp-damp}. By \cref{rem:cir-dir-sum}, the circuit description for $\Phi_x$ can be computed in polynomial time from the classical description of the circuit $V(x)$. Then by \cref{cor:special-channels-dir-sum}, we have $Q(\Phi_x)= \max_\rho \Tr(W_x^\dagger (\ketbra{1}{1}\otimes I)W_x\rho).$
    
    If $x \in Y$ then there exists a quantum state $\rho$ such that $\Tr(W_x^\dag(\ketbra{1}{1}\otimes I)W_x\rho) \geq \frac{3}{4}$. Thus the quantum capacity $$Q(\Phi_x)=Q^{(1)}(\Phi_0\oplus_{W_x}\Phi_1)\geq\Tr(W_x^\dag(\ketbra{1}{1}\otimes I)W_x\rho)\geq\frac{3}{4}.$$
    If $x \in N$ then for every quantum state $\rho$, $\Tr(W_x^\dag(\ketbra{1}{1}\otimes I)W_x\rho) \leq \frac{1}{4}$. Then, as above, the quantum capacity $$Q(\Phi_x)=Q^{(1)}(\Phi_0\oplus_{W_x}\Phi_1)\leq\frac{1}{4}.$$
\end{proof}

This results in the immediate corollary stated as follows.

\begin{corollary} \label{cor:QMA-hard}
    Given a quantum channel $\Phi_{\mc{C}} : B(\mc{H}) \to B(\mc{K})$, presented succinctly by a quantum circuit $\mc{C}$ which implements $\Phi_{\mc{C}}$, it is $\mathrm{QMA}$-hard to decide if $Q(\Phi_{\mc{C}}) \geq \frac{3}{4}$ or $Q(\Phi_{\mc{C}}) \leq \frac{1}{4}$.
\end{corollary}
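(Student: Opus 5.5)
The corollary follows from \cref{thm:main-QMA-hardness} combined with the existence of a \textsf{QMA}-complete promise problem. I will fix a \textsf{QMA}-complete problem $L^\ast=(Y^\ast,N^\ast)$, for instance the $k$-local Hamiltonian problem of \cite{KSV02}. Since $L^\ast\in\textsf{QMA}$, and by \cite{MW05} we may amplify to completeness $3/4$ and soundness $1/4$, we have $L^\ast\in\textsf{QMA}_{\frac34,\frac14}$.

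Applying \cref{thm:main-QMA-hardness} to $L^\ast$ gives a polynomial-time map $x\mapsto \mc{C}_x$, where $\mc{C}_x$ is a circuit implementing $\Phi_x=\mc{A}_{1/2}\oplus_{W_x}\mc{A}_0$. This map satisfies
\[
x\in Y^\ast \Rightarrow Q(\Phi_{\mc{C}_x})\geq \tfrac34, \qquad x\in N^\ast\Rightarrow Q(\Phi_{\mc{C}_x})\leq \tfrac14 .
\]
It is therefore a Karp reduction from $L^\ast$ to the promise problem $(\{\mc{C}: Q(\Phi_\mc{C})\geq\frac34\},\{\mc{C}: Q(\Phi_\mc{C})\leq \frac14\})$.

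For an arbitrary $L'\in\textsf{QMA}$, composing its polynomial-time reduction to $L^\ast$ with $x\mapsto\mc{C}_x$ yields a polynomial-time reduction from $L'$ to the capacity problem. This is exactly \textsf{QMA}-hardness in the sense of \cref{sec: cmplx-th}. Alternatively, one can skip the complete problem entirely and apply \cref{thm:main-QMA-hardness} directly to $L'$ after amplification.

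The only point needing care is that the output of the reduction is a legitimate circuit description in the canonical form of \cref{rem: gates}, built from the fixed gate set together with ancilla preparation and partial trace. That it has polynomial size is the content of \cref{rem:cir-dir-sum}. I expect this circuit-encoding check to be the main, though minor, obstacle: the computational-basis measurement in the construction must be realised by copying the register into the traced-out environment with CNOTs, and the controlled application of $V_0$ or $V_1$ must be implemented conditioned on the first bit. Both are standard and use gates polynomial in the size of $V(x)$.
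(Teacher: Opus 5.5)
Your proposal is correct and matches the paper, which records this corollary as an immediate consequence of \cref{thm:main-QMA-hardness}, with polynomial-time constructibility of the circuit supplied by \cref{rem:cir-dir-sum}. Your second option, applying the theorem directly to each $L'\in\textsf{QMA}=\textsf{QMA}_{\frac34,\frac14}$, is exactly the paper's route, so the detour through the local Hamiltonian problem is harmless but unnecessary.
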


\section{A special case of the entanglement-assisted zero-error classical capacity of quantum channels is uncomputable} \label{sec: RE-hard}

In this section, we establish the following relations between the capacity $C^{(1)}_{0,\PME}$ of quantum channels and the quantum independence number of graphs.

\begin{theorem}\label{theorem:main=}
There is a computable mapping from (simple and undirected) graphs $G$ to classical-to-quantum channels $\Phi_G$ such that $C^{(1)}_{0,\PME}(\Phi_G)= \log \left(\alpha_q(G)\right)$.
\end{theorem}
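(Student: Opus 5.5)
I would take $\Phi_G$ to be the c-q channel on the alphabet $\Sigma=V$ that sends $\ketbra{v}$ to a state $\rho_v$ chosen so that $\Tr(\rho_v\rho_w)\neq 0$ exactly when $v=w$ or $(v,w)\in E$. A computable choice is $\rho_v=\frac{1}{\deg(v)+1}\sum_{e\ni v}\ketbra{e}+\frac{1}{\deg(v)+1}\ketbra{v}$ on $\mc{H}_B=\C^{E\sqcup V}$, where $E$ is read as the set of unordered edges. Here $\rho_v$ and $\rho_w$ share a support vector precisely when they share an edge, i.e.\ when $(v,w)\in E$, or when $v=w$. Only the supports matter below, since all the $\rho_v$ are positive.

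Next I would unpack the orthogonality condition of \cref{def: max-EA}. With encodings $\mc{E}_m$ given by PVMs $\{P^m_a\}$ on $\mc{H}_{A_0}$ and $\phi^+$ the maximally entangled state on $A_0A_0$ of dimension $d$, I expect
\[
(\Phi_G\circ\mc{E}_m\otimes\id)(\phi^+)=\sum_{a\in V}\rho_a\otimes \tfrac1d (P^m_a)^T .
\]
The transpose arises because $\Tr_1\big((P\otimes I)\phi^+\big)=P^T/d$. Two such positive operators are orthogonal iff the trace of their product vanishes, i.e.\ iff
\[
\sum_{a,b}\Tr(\rho_a\rho_b)\Tr\big((P^m_a)^T(P^{m'}_b)^T\big)=0 .
\]
Every term in this sum is nonnegative, so the sum vanishes iff each term does. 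Since $\Tr(\rho_a\rho_b)>0$ exactly on $a=b$ or $a\sim b$, and $\Tr(PQ)=0\iff PQ=0$ for projections, the condition is: for $m\neq m'$, $P^m_aP^{m'}_b=0$ whenever $a=b$ or $(a,b)\in E$. This is exactly the PVM characterisation of a perfect strategy for $\mathrm{IS}(G,t)$ with $t=|M|$, as recalled in \cref{sec: q-ind-num}; the transpose is harmless because $\{(P^m_a)^T\}$ is again a family of PVMs with the same orthogonality relations.

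The equality then follows in both directions. If $\alpha_q(G)\ge t$, the characterisation from \cite{HMPS17} gives PVMs $\{P^i_v\}$ on some $\C^d$ satisfying those orthogonality relations. Using them with $\phi^+$ on $\C^d\otimes\C^d$ gives a valid code with $|M|=t$, so $C^{(1)}_{0,\PME}(\Phi_G)\ge\log\alpha_q(G)$. Conversely, any feasible pair $(\phi^+,M)$ yields such PVMs with $t=|M|$, so $|M|\le\alpha_q(G)$. The supremum in \cref{def: max-EA} is therefore attained and equals $\log\alpha_q(G)$. This uses that $\alpha_q(G)\le|V|$ is finite, which holds because the PVMs for $i\neq j$ must be pairwise orthogonal on every vertex. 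The map $G\mapsto\Phi_G$ is clearly computable.

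The main obstacle is the middle step: computing the output operator precisely, with the transpose placed correctly and the maximally entangled state on matching dimensions, and justifying the passage from a vanishing nonnegative sum to termwise operator orthogonality. A secondary point is the convention for $\log 0$ and the case $t=1$; I would note that $\alpha_q(G)\ge1$ always, so the formula is well defined.
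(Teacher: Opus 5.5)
Your proof is correct, and its core matches the paper's. Both expand $\Tr(\rho_i\rho_j)$ as a sum of nonnegative terms $\Tr(\rho_a\rho_b)\,\Tr(P^i_aP^j_b)$, force each term to vanish, and identify the resulting conditions with the PVM characterisation of perfect strategies for $\mathrm{IS}(G,t)$.

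The real difference is how $\Phi_G$ is realised. The paper uses pure outputs $\ket{\tilde\psi_v}\in\C^V$ from a Cholesky factorisation of $\alpha I+A$ with $\alpha=|\min\lambda(A)|$. That requires an eigenvalue computation and yields algebraic, generally irrational, entries. Your mixed incidence states on $\C^{E\sqcup V}$ have rational entries, so computability of $G\mapsto\Phi_G$ is immediate. The cost is a larger output space and mixed rather than pure outputs.

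Your ending is also slightly more careful than the paper's. You bound $|M|\le\alpha_q(G)$ for every feasible code and check $\alpha_q(G)\le|V|$. The paper instead assumes from the outset that $C^{(1)}_{0,\PME}(\Phi_G)=\log t$ for some integer $t$.
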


As an immediate consequence of \Cref{thm:Gt} that determining whether $\alpha_q(G)\geq t$ is \textsf{RE}-hard, we have the following undecidability result for $C^{(1)}_{0,\PME}$.
\begin{corollary}\label{theorem:RE}
    The decision problem
    \begin{quote}
        Given a positive integer $t$ and a classical-to-quantum channel $\Phi$, is $C^{(1)}_{0,\PME}(\Phi)\geq \log t$?
    \end{quote}
    is $\textsf{RE}$-hard.
\end{corollary}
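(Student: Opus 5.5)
The corollary is a composition of two computable reductions. By \cref{thm:Gt}, deciding whether $\alpha_q(G)\geq t$ for a graph $G$ and a positive integer $t$ is \textsf{RE}-hard. So the plan is to take an instance $(G,t)$ of that problem and map it to the instance $(\Phi_G,t)$ of the channel problem. Here $\Phi_G$ is the classical-to-quantum channel produced by the computable mapping of \cref{theorem:main=}.

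First, computability of the reduction. The map $(G,t)\mapsto(\Phi_G,t)$ is computable because $G\mapsto\Phi_G$ is computable by \cref{theorem:main=} and $t$ is passed through unchanged. A concrete $\Phi_G$ is the c-q channel $v\mapsto\rho_v$, where each $\rho_v$ is a normalized projector onto a span of basis vectors indexed by the edges incident to $v$ together with a private vector for $v$. Then $\Tr(\rho_v\rho_w)\neq 0$ exactly when $v$ and $w$ are adjacent. Its description, e.g.\ by rational density matrices or Kraus operators, is plainly computable. As noted in \cref{subsec:outlook}, the precise encoding of the channel does not matter, provided it is computable.

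Second, correctness of the reduction. We need $\alpha_q(G)\geq t$ if and only if $C^{(1)}_{0,\PME}(\Phi_G)\geq\log t$. This follows from the identity $C^{(1)}_{0,\PME}(\Phi_G)=\log\alpha_q(G)$ of \cref{theorem:main=}, together with strict monotonicity of $\log$ on positive integers. We should check that $\alpha_q(G)$ is a finite positive integer, so that the supremum in \cref{def: max-EA} is a maximum of $\log$ of an integer. This holds: $\alpha_q(G)\ge 1$, since one vertex always gives a perfect strategy for $\mathrm{IS}(G,1)$. Also $\alpha_q(G)\leq |V|$, since the PVMs $P^i_v$ for distinct $i$ must satisfy $P^i_vP^j_v=0$, so their ranges are mutually orthogonal and each question uses up a positive amount of trace. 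Hence yes-instances of the \textsf{RE}-hard problem map to yes-instances and no-instances to no-instances.

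Combining these, any problem in \textsf{RE} reduces to ``$\alpha_q(G)\geq t$?'', which in turn reduces to ``$C^{(1)}_{0,\PME}(\Phi)\geq\log t$?''. The latter is therefore \textsf{RE}-hard. All the real difficulty sits in \cref{theorem:main=}, which we are allowed to assume. The only point requiring care here is the one above: the equality there must be an exact equality of finite values, not merely a bound or a statement about suprema that need not be attained, so that the threshold $\log t$ is preserved in both directions.
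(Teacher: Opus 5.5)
Your proposal is correct and matches the paper's argument: the paper derives this corollary as an immediate consequence of composing the \textsf{RE}-hardness of ``$\alpha_q(G)\geq t$?'' (\cref{thm:Gt}) with the computable map $G\mapsto\Phi_G$ and the identity $C^{(1)}_{0,\PME}(\Phi_G)=\log\alpha_q(G)$ of \cref{theorem:main=}. One small note is that the paper's $\Phi_G$ is not your edge-indexed mixed-state channel; it sends $\ket{v}\bra{v}$ to pure states $\ket{\tilde{\psi}_v}\bra{\tilde{\psi}_v}$ obtained from a Cholesky factorization of $\alpha I+A$. The paper's proof of the identity uses only nonnegativity and the fact that the overlap is nonzero exactly when $v=w$ or $(v,w)\in E$, so it carries over verbatim to your rational, more cleanly encodable channel (with $\lvert\braket{\tilde{\psi}_v}{\tilde{\psi}_w}\rvert^2$ replaced by $\Tr(\rho_v\rho_w)$).
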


Now we prove \cref{theorem:main=}.

\begin{proof}[Proof of \cref{theorem:main=}]
We first describe the mapping from graphs $G$ to c-q channels $\Phi_G$. Given any (simple and undirected) graph $G=(V,E)$, consider the matrix $A=(A_{vw})\in \C^{V\times V}$ defined by $A_{vw}=\begin{cases}
    1 & \text{ if } v= w \text{ or } (v,w)\in E\\
    0 & \text{ otherwise}
\end{cases}$. Let $\alpha:=\abs{\min \lambda(A)}$ be the absolute value of the minimal eigenvalue of $A$. Then $\alpha I+A$ is a positive semidefinite matrix with diagonal elements $1+\alpha$. We then apply Cholesky decomposition to find vectors $\{\ket{\psi_v}:v\in V\}\in \C^{V}$ such that
\begin{equation*}
    \braket{\psi_v}{\psi_w}=\begin{cases}
        1 & \text{ if } (v,w)\in E\\
        1+ \alpha & \text{ if } v=w\\
        0 & \text{ otherwise.}
    \end{cases}
\end{equation*}
Let 
\begin{equation}
   \ket{\tilde{\psi}_v}:=\frac{1}{\sqrt{1+\alpha}}\ket{\psi_v} \label{eq:tpsi}
\end{equation}
for all $v\in V$. Then $\{\ket{\tilde{\psi}_v}:v\in V  \}$ are unit vectors in $\C^V$ such that
\begin{equation*}
\braket{\tilde{\psi}_v}{\tilde{\psi}_w}=\begin{cases}
    \frac{1}{1+\alpha} & \text{ if } (v,w)\in E\\
    0 & \text{ if } v\neq w \text{ and }(v,w)\notin E.
\end{cases}
\end{equation*}
Define the c-q channel $\Phi_G:B(\C^V)\rightarrow B(\C^V)$ by sending
\begin{equation*}
    \ket{v}\bra{v}\mapsto \ket{\tilde{\psi}_v}\bra{\tilde{\psi}_v}
\end{equation*}
for all $v\in V$, where $\{\ket{v}:v\in V\}$ is the standard orthonormal basis for $\C^V$.

    We first prove that $C^{(1)}_{0,\PME}(\Phi_G)\geq  \log \left(\alpha_q(G)\right)$. Let $t:= \alpha_q(G)$. Then there are PVMs $\{P^i_v:v\in V\}\subseteq B(\mc{H}), i\in [t]$ for some finite-dimensional space $\mc{H}$  such that for $i\neq j$, $P^i_vP^j_w=0$ whenever $ v=w \text{ or } (v,w)\in E$. Let $\ket{\tau}\in \mc{H}\otimes\mc{H}$ be a maximally entangled state. For every $i\in [t]$, let $\mc{E}_i:B(\mc{H})\rightarrow B(\C^V)$ be the q-c channel sending
    \begin{equation*}
        \rho \rightarrow \sum_{v\in V} \Tr(P^i_v\rho)\ket{v}\bra{v},
    \end{equation*}
    and let 
    \begin{equation*}
        \rho_i= \left(\Phi_G\otimes \id_H  \right)\circ \left( \mc{E}_i\otimes \id_H \right)(\ket{\tau}\bra{\tau}).
    \end{equation*}
    Then for any $i\neq j$,
    \begin{align*}
        \Tr(\rho_i\rho_j)=\frac{1}{\dim(\mc{H})^2}\sum_{v,w\in V}\abs{\braket{\tilde{\psi}_v}{\tilde{\psi}_w}}^2 \Tr(P^i_vP^j_w).
    \end{align*}
When $v=w$ or $(v,w)\in E$, $Tr(P^i_vP^j_w)=0$; when $v\neq w$ and $(v,w)\notin E$, $\braket{\tilde{\psi}_v}{\tilde{\psi}_w}=0$. It follows that $\Tr(\rho_i\rho_j)=0$ for all $i\neq j$. Hence $\rho_1,\ldots,\rho_t$ are perfectly distinguishable. This implies $C^{(1)}_{0,\PME}(\Phi_G)\geq \log t$.

Now we prove $C^{(1)}_{0,\PME}(\Phi_G)\leq  \log \left(\alpha_q(G)\right)$. Let $t\in \N$ such that $C^{(1)}_{0,\PME}(\Phi_G)= \log t$.
Then there exists a finite-dimensional space $\mc{H}_R$, a finite-dimensional space $\mc{H}_A$ and a maximally entangled state $\ket{\psi}\in \mc{H}_A\otimes \mc{H}_R$, and q-c channels $\mc{E}_1,\ldots,\mc{E}_t: B(\mc{H}_A)\rightarrow B(\C^V)$ such that
$\rho_1,\ldots,\rho_t$ are perfectly distinguishable, where
    \begin{equation*}
         \rho_i= \left(\Phi_G\otimes \id_H  \right)\circ \left( \mc{E}_i\otimes \id_H \right)(\ket{\psi}\bra{\psi}),
    \end{equation*}
and $\mc{E}_i(\rho)=\sum_{v\in V}\Tr(P^i_v\rho)\ket{v}\bra{v}$ for all $\rho\in B(\mc{H}_A)$, where $\{P^i_v:\,v\in V\}$ is a PVM for all $i\in [t]$.

For every $i\neq j$,
\begin{align*}
    0 &= \Tr(\rho_i\rho_j)\\
    & = \sum_{v,w\in V}\Tr\big(  P^i_v    P^j_w   \big)\abs{\braket{\tilde{\psi}_v}{\tilde{\psi}_w}}^2 \\
    & = \sum_{v\in V} \Tr\big(  P^i_v  \  P^j_v   \big) + \sum_{\substack{v\neq w\\ (v,w)\in E}}\Tr\big(  P^i_v   P^j_w  \big).
\end{align*}
 Here, the first equality uses that $\rho_i$ and $\rho_j$ are perfectly distinguishable, and the last equality uses that $\braket{\tilde{\psi}_v}{\tilde{\psi}_w}=0$ if $v\neq w$ and $(v,w)\notin E$. It follows that for all $i\neq j$,
\begin{equation}
     P^i_v   P^j_w  =0 \label{eq:0}
\end{equation}
if $v=w$ or $(v,w)\in E$. This implies $\alpha_q(G)\geq t$.

We conclude that $C^{(1)}_{0,\PME}(\Phi_G)= \log \left(\alpha_q(G)\right)$.
\end{proof}

\bibliographystyle{bibtex/bst/alphaarxiv.bst}
\bibliography{bibtex/bib/full.bib,bibtex/bib/quantum.bib,bibtex/qref}

\end{document}